
\documentclass[12pt]{article}
\usepackage{amsmath, amssymb, amsthm}

\topmargin=-30pt
\textheight=648pt
\oddsidemargin=0pt
\textwidth=448pt

\def\w{\mbox{$\omega$}}
\def\x{{\mathbf x}}
\def\y{{\mathbf y}}

\def\0{{\mathbf 0}}
\def\1{{\mathbf 1}}
\def\r{{\mathbf r}}  

\def\a{{\mathbf a}}
\def\b{{\mathbf b}}
\def\c{{\mathbf c}}

\def\F{{\mathbb F}}

\def\Z{{\mathbb Z}}

\def\Fq{{\mathbb F}_q}
\def\z{{\zeta}}

\def\Zp8{{\Z_{p^\infty}}}
\def\R{{\mathcal R}}

\newtheorem{thm}{Theorem}[section]

\newtheorem{lem}[thm]{Lemma}
\newtheorem{cor}[thm]{Corollary}
\theoremstyle{definition}

\newtheorem{rem}[thm]{Remark}

\newenvironment{proof*}{\noindent {\bf Proof of Theorem 3.1 \ \ }}{\hfill $\square$\medskip}

\newcommand{\ob}{\mbox{$\overline{\omega}$}}
\newcommand{\om}{\mbox{$\omega$}}



\def\rank{\operatorname{rank}}

\date{December 17, 2011}

\begin{document}

\newcommand{\comment}[1]{} 

\title{Construction of quasi-cyclic self-dual codes}

\author{
Sunghyu Han\thanks{\scriptsize School of Liberal Arts, Korea
University of Technology and Education, Cheonan 330-708, South
Korea, Email: sunghyu@kut.ac.kr}, Jon-Lark Kim\thanks{\scriptsize
Department of Mathematics, University of Louisville, Louisville, KY
40292, USA, Email: jl.kim@louisville.edu}, Heisook
Lee\thanks{\scriptsize Department of Mathematics, Ewha Womans
University, Seoul 120-750, South Korea, Email: hsllee@ewha.ac.kr},
and Yoonjin Lee\thanks{\scriptsize Department of Mathematics, Ewha
Womans University, Seoul 120-750, South Korea, Email:
yoonjinl@ewha.ac.kr} {\thanks{\scriptsize The author is a
corresponding author and supported by the National Research Foundation of
Korea(NRF) grant funded by the Korea government(MEST) (No.
2010-0015201).}}} 

\maketitle

\begin{abstract}\noindent
There is a one-to-one correspondence between $\ell$-quasi-cyclic
codes over a finite field $\mathbb F_q$ and linear codes over a ring
$R = \mathbb F_q[Y]/(Y^m-1)$. Using this correspondence, we prove
that every $\ell$-quasi-cyclic self-dual code of length $m\ell$ over
a finite field $\mathbb F_q$ can be obtained by the {\it
building-up} construction, provided that char $(\F_q)=2$ or $q
\equiv 1 \pmod 4$, $m$ is a prime $p$, and $q$ is a primitive
element of $\F_p$. We determine possible weight enumerators of a
binary $\ell$-quasi-cyclic self-dual code of length $p\ell$ (with
$p$ a prime) in terms of divisibility by $p$. We improve the result
of~\cite{cubic} by constructing new binary cubic (i.e.,
$\ell$-quasi-cyclic codes of length $3\ell$) optimal self-dual codes of
lengths $30, 36, 42, 48$ (Type I), $54$ and $66$. We also find
quasi-cyclic optimal self-dual codes of lengths $40$, $50$, and
$60$. When $m=5$, we obtain a new $8$-quasi-cyclic self-dual $[40,
20, 12]$ code over $\F_3$ and a new $6$-quasi-cyclic self-dual $[30,
15, 10]$ code over $\F_4$. When $m=7$, we find a new
$4$-quasi-cyclic self-dual $[28, 14, 9]$ code over $\F_4$ and a new
$6$-quasi-cyclic self-dual $[42,21,12]$ code over $\F_4$.
\end{abstract}




\section*{Introduction}
Self-dual codes have been one of the most interesting classes of
linear codes over finite fields and in general over finite rings.
They interact with other areas including
lattices~\cite{ConSlo_sphere, Ebe}, invariant
theory~\cite{NebRaiSlo}, and designs~\cite{AssKey}. On the other
hand, quasi-cyclic codes have been one of the most practical classes
of linear codes. Linear codes which are quasi-cyclic and self-dual
simultaneously are an interesting class of codes, and this class of
codes is our main topic. We refer to~\cite{HufPle} for a basic
discussion of codes.

From the module theory over rings, quasi-cyclic codes can be considered as
modules over the group algebra of the cyclic group.
For a special ring $R = \mathbb F_q[Y]/(Y^m-1)$,
Ling and Sol\'{e} \cite{QCI, QCIII} consider linear codes over a ring $R$,
where $m$ is a positive integer coprime to $q$, and they use a
correspondence $\phi$ between (self-dual) quasi-cyclic codes over $\F_q$ and (self-dual, respectively) linear codes over $R$.
We call quasi-cyclic codes over $\F_q$ {\it{cubic}}, {\it{quintic}}, or {\it{septic}} codes depending
on $m=3, 5,$ or $7$, respectively. Bonnecaze et.~al.~\cite{cubic} studied binary cubic
self-dual codes, and Bracco et.~al.~\cite{BraNatSol} considered binary quintic self-dual codes.

In this paper, we focus on construction and classification of
quasi-cyclic self-dual codes over a finite field $\Fq$ under the
usual permutation or monomial equivalence. We note that the
equivalence under the correspondence $\phi$ may not be preserved;
two inequivalent linear codes over a ring $R$ under a permutation
equivalence may correspond to two equivalent quasi-cyclic codes over
a finite field $\Fq$ under a permutation or monomial equivalence.
Hence, we first construct all self-dual codes over the ring $R$
using a building-up construction. Rather than considering the
equivalence of these codes over $R$, we consider the equivalence of
their corresponding quasi-cyclic self-dual codes over $\Fq$ to get a
complete classification of quasi-cyclic self-dual codes over $\Fq$.

We prove that every $\ell$-quasi-cyclic self-dual code of length
$m\ell$ over $\mathbb F_q$ can be obtained by the {\it building-up
construction}, provided that char$(\F_q)=2$ or $q \equiv 1 \pmod 4$,
$m$ is a prime $p$, and $q$ is a primitive element of $\F_p$. Our
result shows that the building-up construction is a complete method
for constructing all $\ell$-quasi-cyclic self-dual codes of length
$m\ell$ over $\mathbb F_q$ subject to certain conditions of $m$ and
$q$. We determine possible weight enumerators of a binary
$\ell$-quasi-cyclic self-dual code of length $p\ell$ with $p$ a
prime in terms of divisibility by $p$.

By employing our building-up constructions, we classify binary cubic
self-dual codes of lengths up to $24$, and we construct
binary cubic optimal self-dual codes of lengths $30, 36, 42, 48$ (Type I),
$54$ and $66$. We point out that the advantage of our construction
is that we can classify all binary cubic self-dual codes in a more
efficient way without searching for all binary self-dual codes. We
summarize our result on the classification of binary cubic extremal
self-dual codes in Table~\ref{tab:bin_cubic_sd}. We also give a
complete classification of all binary quintic self-dual codes of
even lengths $5\ell \le 30$, and construct such optimal codes of
lengths $40$, $50$, and $60$. For various values of $m$ and $q$, we
obtain quintic self-dual codes of length $5 \ell$ over $\mathbb F_3$
and $\mathbb F_4$ and septic self-dual codes of length $7\ell$ over
$\F_2, \F_4$, and $\F_5$ which are optimal or have the best known
parameters. In particular, we find a new quintic self-dual $[40, 20,
12]$ code over $\F_3$ and a new quintic self-dual $[30, 15, 10]$
code over $\F_4$. We also obtain a new septic self-dual $[28, 14,
9]$ code over $\F_4$ and a new septic self-dual $[42,21,12]$ code
over $\F_4$.

\comment{
The main result of this paper is the following: Let char $(\F_q)=2$ or $q \equiv 1 \pmod 4$
and $R = \F_q[Y]/(Y^m-1)$, where $m$ is a prime $p$ and $q$ is a primitive element of $\F_p$.
Every self-dual code over the ring $R$ (equivalently, $\ell$-quasi-cyclic self-dual codes of
length $m\ell$ over $\mathbb F_q$) can be obtained by the building-up construction.
This result shows that the building-up construction for this case is a complete method for constructing all the self-dual codes over the ring $R$ (equivalently, $\ell$-quasi-cyclic self-dual codes of length $m\ell$ over $\mathbb F_q$).
Using this result, we find a complete classification of all binary {\it quintic}
(i.e., $\ell$-quasi-cyclic codes of length $5\ell$) self-dual codes of lengths up to $30$, and
we also construct optimal binary quintic self-dual codes of lengths $40$ (Type II), $50$, and $60$.
Furthermore, we find interesting quasi-cyclic self-dual codes by investigating various cases of $m$ and $q$, and they are quintic
self-dual codes over $\mathbb F_3$ and $\mathbb F_4$ and septic
(i.e., $\ell$-quasi-cyclic codes of length $7\ell$) self-dual codes over
$\F_2, \F_4$, and $\F_5$ which are optimal or have the best known parameters.
In fact, we obtain a new quintic self-dual $[40, 20, 12]$ code over $\F_3$ and
new septic self-dual codes over $\F_4$ with parameters
$[28, 14, 9]$ and $[42, 21, 12]$.
}

\begin{table}
\centering \caption{Binary extremal cubic self-dual codes of lengths
up to $66$} \label{tab:bin_cubic_sd}
\[
\begin{tabular}{c|c|c|l}
\hline

\hline length $n$ & highest min. wt & No. of extremal & Ref.\\
& & cubic self-dual codes \\ \hline
6  &  2 & 1 & Sec.~\ref{sec:construction}\\
12 &  4 & 1 & Sec.~\ref{sec:construction}\\
18 &  4 & 1 & Sec.~\ref{sec:construction}\\
24 &  8 & 1 & Sec.~\ref{sec:construction}\\
30 &    6 & $8$ & Sec.~\ref{sec:construction}, \cite{cubic}, \cite{Mun2010Web}\\
36 &    8 & $13$ & Sec.~\ref{sec:construction}, \cite{cubic}, \cite{Gab2011_Web}, \cite{HarMun2010} \\
42 &    8 & $1569$& Sec.~\ref{sec:construction}, \cite{cubic}, \cite{BHM}, \cite{BouYanRus} \\
48 &    10& $\ge 4$ & Sec.~\ref{sec:construction}, \cite{cubic}\\
54 &    10& $\ge 7$ & Sec.~\ref{sec:construction}, \cite{cubic}\\
60 &    12& $\ge 3$ & \cite{cubic}\\
66 &    12& $\ge 7$ & Sec.~\ref{sec:construction}, \cite{cubic}\\

\hline

\hline
\end{tabular}
\]
\end{table}

This paper is organized as follows. Section~\ref{sec:Preliminaries}
contains some basic notations and definitions, and
Section~\ref{sec:Building-up} presents the building-up construction
method of quasi-cyclic self-dual codes over finite fields. In
Section~\ref{sec:construction}, we construct binary quasi-cyclic
self-dual codes, and we find the cubic codes and
quintic codes. In
Section~\ref{sec:construction-various}, we construct quasi-cyclic
self-dual codes over various fields such as $\F_2$, $\F_3, \F_4$,
and $\mathbb F_5$, and we obtain the cubic codes, the quintic codes
and the septic codes. We use Magma~\cite{CanPla} for
computations.

\section{Preliminaries} \label{sec:Preliminaries}

We briefly introduce some basic notions about quasi-cyclic self-dual
codes. For more detailed description, we refer to~\cite{QCI, QCIII}.

Let $R$ be a commutative ring with identity. A {\it linear code} $C$
of length $n$ over $R$ is defined to be an $R$-submodule of $R^n$;
in particular, if $R$ is a finite field $\F_q$ of order $q$, then
$C$ is a vector subspace of $\F_q^n$ over $\F_q.$ The dual of
$C$ is denoted by $C^{\perp}$, $C$ is \textit{self-orthogonal} if $C
\subseteq C^{\perp}$, and \textit{self-dual} if $C = C^{\perp}$. We
denote the standard shift operator on $R^n$ by $T$. A linear code
$C$ is said to be \emph{quasi-cyclic of index $\ell$} or {\it
$\ell$-quasi-cyclic} if it is invariant under
$T^{\ell}$. A $1$-quasi-cyclic code means a cyclic code. Throughout
this paper, we assume that the index $\ell$ divides the code length
$n$.

Let $m$ be a positive integer coprime to the characteristic of
$\F_q$, $\F_q[Y]$ be a polynomial ring, and $R := R(\F_q, m) =
\F_q[Y]/(Y^m - 1)$. Then it is shown \cite{QCI} that there is a one-to-one
correspondence between $\ell$-quasi-cyclic codes over $\F_q$ of
length $\ell m$ and linear codes over $R$ of length $\ell$, and the
correspondence is given by the map $\phi$ defined as follows. Let
$C$ be a quasi-cyclic code over $\F_q$ of length $lm$ and index $l$
with a codeword $\c$ denoted by
$\c =(c_{00}, c_{01}, \dots, c_{0,\ell-1},
      c_{10}, \dots, c_{1,\ell-1}, \dots,
       c_{m-1,0}, \dots, c_{m-1,\ell-1}).$
Let $\phi$ be a map $\phi: {\F_q}^{\ell m} \rightarrow R^{\ell}$ defined by
\begin{equation*}
\phi(\c) = (\c_{0}(Y), \c_{1}(Y), \dots, \c_{\ell-1}(Y)) \in R^{\ell},
\end{equation*}
where
$\c_j(Y) = \sum_{i=0}^{m-1}c_{ij}Y^i \in R, {\mbox{ for }} j=0, \dots, \ell-1.$
We denote by $\phi(C)$ the image of $C$ under $\phi$.

A {\it conjugation} map $^{-}$ on $R$ is defined as the map that
sends $Y$ to $Y^{-1}=Y^{m-1}$ and acts as the identity map on $\F_q$, and
it is extended $\F_q$-linearly.
On $R^{\ell}$, we define the \textit{Hermitian inner product}
by \ $\langle \x, \y \rangle = \sum_{j=0}^{\ell-1}x_{j}\overline{y_{j}}$ \ \
for $\x = (x_0, \dots, x_{\ell -1})$ and $\y = (y_0, \dots, y_{\ell -1})$.

It is proved \cite{QCI} that for $\a, \b \in \F_q^{\ell m}$, \
$T^{\ell k}(\a)\cdot \b = 0$ \ for all $0 \leq k \leq m-1$ \ if and
only if \ $\langle \phi(\a), \phi(\b) \rangle = 0$, where $\cdot$
denotes the standard Euclidean inner product. From this fact, it
follows that $\phi(C)^{\perp} = \phi(C^{\perp})$, where
$\phi(C)^{\perp}$ is the dual of $\phi(C)$ with respect to the
Hermitian inner product, and $C^{\perp}$ is the dual of $C$ with
respect to the Euclidean inner product. In particular, a
quasi-cyclic code $C$ over $\F_q$ is self-dual with respect to the
Euclidean inner product if and only if $\phi(C)$ is self-dual over
$R$ with respect to the Hermitian inner product~\cite{QCI}. Two
linear codes $C_1$ and $C_2$ over $R$ are {\it equivalent} if there
is a permutation of coordinates of $C_1$ sending $C_1$ to $C_2$.
Similarly, two linear codes over $\F_q$ are equivalent if there is a
monomial mapping sending one to another. Note that the equivalence
of two linear codes $C_1$ and $C_2$ over $R$ implies a permutation
equivalence of quasi-cyclic linear codes $\phi^{-1}(C_1)$ and
$\phi^{-1}(C_2)$ over $\Fq$, but not conversely in general.


\comment{
It is known~\cite{QCI} that a binary linear code $C$ contains a fixed point free element of
order three if and only if $C$ is obtained by a generalized cubic construction from a binary
code and a quaternary code both of length $\ell$, and this is exactly the case $m=3$
in the above correspondence given by $\phi.$ Therefore, for a quasi-cyclic code $C$ over $F$
of length $lm$ and index $l$, as in~\cite{cubic}, when $m=3$, we call $C$ a {\it cubic code},
and when $m=5$, we call $C$ a {\it quintic code} throughout this paper.
} 

\section{Construction of quasi-cyclic self-dual codes}
\label{sec:Building-up}

Throughout this paper, let $R = \F_q[Y]/(Y^m -1)$, and self-dual (or self-orthogonal) codes
over $R$ means self-dual (or self-orthogonal) codes with respect to the Hermitian inner product.

We begin with the following lemma regarding the length of self-dual codes.
\begin{lem}
Let $R = \F_q[Y]/(Y^m - 1)$.
\begin{enumerate}
\item If $char(\F_q) = 2$ or $q \equiv 1 \pmod 4$, then
there exists a self-dual code over $R$ of length $\ell$
if and only if $2 \mid \ell$.
\item If $q \equiv 3 \pmod 4$, then
there exists a self-dual code over $R$ of length $\ell$
if and only if $4 \mid \ell$.
\end{enumerate}
\end{lem}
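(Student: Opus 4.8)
The plan is to reduce the existence question over $R$ to a product of classical existence questions over finite fields by means of the Chinese Remainder Theorem. Since $m$ is coprime to the characteristic of $\F_q$, the polynomial $Y^m-1$ is squarefree, so $R=\F_q[Y]/(Y^m-1)\cong\prod_i K_i$ is a product of finite field extensions $K_i=\F_q[Y]/(g_i(Y))$, one for each irreducible factor $g_i$ of $Y^m-1$. The conjugation $Y\mapsto Y^{-1}$ permutes the factors according to the action $\alpha\mapsto\alpha^{-1}$ on roots, and I would sort them into three classes: the distinguished factor $Y-1$, which yields a copy of $\F_q$ on which conjugation is the identity; the remaining self-reciprocal factors, each yielding a field $\F_{q^{2f}}$ on which conjugation restricts to the order-two automorphism $x\mapsto x^{q^f}$, i.e.\ a Hermitian conjugation; and the factors occurring in reciprocal pairs $\{g_i,g_i^\ast\}$ that conjugation interchanges.

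Next I would show that a Hermitian self-dual code over $R$ of length $\ell$ is the same thing as a tuple consisting of a self-dual code of length $\ell$ on each component, taken with respect to the form induced there. For an interchanged pair the two components form a hyperbolic space, and for \emph{any} linear code $D$ of length $\ell$ over the first field the pair $(D,D^\perp)$ assembles into a self-dual code; hence interchanged pairs never obstruct existence, for any $\ell$. Consequently, existence over $R$ is equivalent to the simultaneous existence of a Euclidean self-dual code of length $\ell$ over $\F_q$ (from the $Y-1$ component, where the induced form is Euclidean) and a Hermitian self-dual code of length $\ell$ over each $\F_{q^{2f}}$ (from the self-reciprocal components).

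I then establish the two classical length conditions. Over $\F_{q^{2f}}$ a cardinality count forces $\ell$ even, and conversely an element of norm $-1$ (which exists since the norm map onto the fixed field $\F_{q^f}$ is surjective) gives a length-two Hermitian self-dual code, so the Hermitian condition is exactly $2\mid\ell$. Over $\F_q$ with the Euclidean form: in characteristic $2$ or when $q\equiv 1\pmod 4$ there is $\alpha$ with $\alpha^2=-1$, yielding self-dual codes for all even $\ell$, while the cardinality count forces $\ell$ even; when $q\equiv 3\pmod 4$ I would put a generator matrix in standard form $[\,I\mid A\,]$ and read off $AA^{T}=-I$, whence $\det(A)^2=(-1)^{\ell/2}$, and since $-1$ is a nonsquare here, $(-1)^{\ell/2}$ is a square only when $\ell/2$ is even, forcing $4\mid\ell$; a solution of $a^2+b^2=-1$ then builds a length-four self-dual code for the converse.

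Finally I combine these facts. Because the $Y-1$ component is always present, its Euclidean condition is necessary; and since that condition ($2\mid\ell$ in case (i), $4\mid\ell$ in case (ii)) is at least as strong as the $2\mid\ell$ demanded by every Hermitian component, it also suffices to realize self-dual codes on all components simultaneously, giving exactly the stated equivalences. I expect the main obstacle to be the $q\equiv 3\pmod 4$ subcase, namely showing that divisibility by $2$ is insufficient and upgrading it to divisibility by $4$; this is precisely where the nonsquareness of $-1$ enters, through the determinant identity $\det(A)^2=(-1)^{\ell/2}$.
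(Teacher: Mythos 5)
Your proposal is correct, and its starting point (the CRT decomposition of $R$) is the same as the paper's, but the two arguments use that decomposition quite differently, especially in the sufficiency direction. The paper extracts only one fact from the decomposition (citing Theorem 4.2 of \cite{QCI}): the component of a self-dual code at the factor $Y-1$ is a Euclidean self-dual code over $\F_q$ of length $\ell$; necessity then follows from the classical length constraints ($2\mid\ell$ by a dimension count in case (i), and $4\mid\ell$ in case (ii) by quoting Rains--Sloane \cite{RaiSlo}). For sufficiency the paper never looks at the other components at all: it writes down a \emph{constant} generator matrix, i.e.\ one with entries in $\F_q$ (namely $[1 \ \ c]$ with $c^2=-1$ in case (i), and the $2\times 4$ matrix built from $\alpha^2+\beta^2+1=0$ in case (ii)), observes that since conjugation fixes $\F_q$ the Hermitian orthogonality relations reduce to Euclidean ones, so this matrix already generates a self-dual code over $R$ itself, and then takes direct sums. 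Your route instead identifies the induced form on every component (Euclidean on $\F_q$, Hermitian on $\F_{q^{2f}}$ for self-reciprocal factors, a hyperbolic duality on swapped pairs), proves the classical existence criteria from scratch (norm surjectivity for the Hermitian case; the determinant identity $\det(A)^2=(-1)^{\ell/2}$ for $q\equiv 3\pmod 4$), and assembles a code component by component. What your approach buys is a sharper structural statement---existence over $R$ is governed exactly by the Euclidean component, and one sees why the Hermitian components and reciprocal pairs never obstruct; what the paper's constant-lift trick buys is brevity, since it sidesteps the Hermitian existence question and the hyperbolic pairs entirely.

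One inaccuracy in your taxonomy should be fixed, although it does not affect the conclusion. When $q$ is odd and $m$ is even (which is allowed, since $m$ need only be coprime to the characteristic), $Y+1$ is a second self-reciprocal factor of degree $1$, and conjugation acts as the identity on $\F_q[Y]/(Y+1)$ because $Y^{-1}\equiv(-1)^{m-1}=-1$ there; so this component is another Euclidean copy of $\F_q$, not a Hermitian field $\F_{q^{2f}}$. Since it imposes exactly the same condition as the $Y-1$ component, your final equivalence is unharmed, but the trichotomy should read: linear self-reciprocal factors ($Y-1$, and $Y+1$ when $m$ is even) give Euclidean $\F_q$; self-reciprocal factors of even degree give Hermitian $\F_{q^{2f}}$; all remaining factors occur in swapped reciprocal pairs.
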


\begin{proof}

To prove (i) and (ii), we observe the following.
Suppose $C$ is a self-dual code of length $\ell$ over $R$. We may
assume that $C_1$ in the decomposition of $C$ in~\cite[Theorem
4.2]{QCI} is a Euclidean self-dual code over $\F_q$ of length
$\ell$.

For (i), suppose that $char(\F_q) = 2$ or $q \equiv 1 \pmod 4$.
By the above observation, $2 \mid \ell$. Conversely, let $\ell = 2k$. We take a
Euclidean self-dual code over $\F_q$ of length $2$ using the
following generator matrix: $[1 ~~c]$, where $c^2 = -1$. We can see
that this matrix generates a self-dual code $C$ over $R$ of length
$2$. Then the direct sum of the $k$ copies of $C$ is a self-dual code over $R$
of length $\ell = 2k$.

For (ii), let $q \equiv 3 \pmod 4$.
It is well known~\cite[p. 193]{RaiSlo} that if $q \equiv 3 \pmod 4$ then
a self-dual code of length $n$ exists if and only if $n$ is a
multiple of $4$. Hence by the above observation,  $4 \mid \ell$.
Conversely, let $\ell = 4k$ for some positive integer $k$. It is
known~\cite[p. 281]{IreRos} that if $q$ is a power of an odd prime
with $q \equiv 3 \pmod 4$, then there exist nonzero $\alpha$ and
$\beta$ in $\F_q$ such that $\alpha^2+\beta^2+1=0$ in $\F_q$. We
take a Euclidean self-dual code over $\F_q$ of length $4$ with the
following generator matrix:
\begin{equation*}
G= \left[
  \begin{array}{cccc}
    1 & 0 & \alpha & \beta  \\
    0 & 1 & -\beta & \alpha    \\
  \end{array}
\right],
\end{equation*}
where $\alpha^2+\beta^2+1=0$ in $\F_q$. We can see that this matrix
generates a self-dual code $C$ over $R$ of length $4$. Then the direct sum of the $k$ copies of
$C$ is a self-dual code over $R$ of length $\ell
= 4k$.
\end{proof}

The following theorem is the building-up constructions for self-dual codes over $R$,
equivalently, $\ell$-quasi-cyclic self-dual codes over $\F_q$ for any odd prime power $q$.
The proof is similar to that of~\cite{KL}, so the proof is omitted.

\begin{thm} \label{thm:construction1}
Let $C_0$ be a self-dual code over $R$ of length $2\ell$ and
$G_0 = (\r_i)$ be a $k \times 2\ell$ generator matrix for $C_0$, where
$\r_i$ is the $i$-th row of $G_0$, $1 \leq i \leq k$.
\begin{enumerate}

\item  Assume that char $(\F_q)=2$ or $q \equiv 1 \pmod 4$.\\
Let $c$ be in $R$ such that $c\overline{c} = -1$, $\x$ be a vector in $R^{2\ell}$
with $\langle \x, \x \rangle = -1$, and $y_i = -{\langle \r_i, \x \rangle}$ for $1 \leq i \leq k$.
Then the following matrix
\begin{equation*}
G=
\left[
  \begin{array}{cc|c}
    1 & 0 & ~~~~~~~~ \x ~~~~~~~~ \\
    \hline
    y_1 & cy_1 &   \r_1 \\
    \vdots & \vdots & \vdots   \\
    y_{k} & cy_{k} & \r_{k}   \\
  \end{array}
\right]
\end{equation*}
generates a self-dual code $C$ over $R$ of length $2\ell+2$.

\item Assume that $q \equiv 3 \pmod 4$ and $\ell$ is even.\\
Let $\alpha$ and $\beta$ be in $R$ such that
$\alpha \overline{\alpha} + \beta \overline{\beta} = -1$ and
$\alpha \overline{\beta} = \overline{\alpha}\beta $.
Let $\x_1$ and $\x_2$ be vectors in $R^{2\ell}$
such that
$\langle \x_1, \x_2 \rangle = 0$ in $R$ and
$\langle \x_i, \x_i \rangle = -1$ in $R$
for each $i=1, 2$.
For each $i, 1\leq i \leq k$, let
$s_i = -{\langle \r_i, \x_1 \rangle}$,
$t_i = -{\langle \r_i, \x_2 \rangle}$, and
$\y_i = (s_i, t_i, \alpha s_i+ \beta t_i, \beta s_i -\alpha t_i)$
be a vector of length $4$. Then the following matrix
\begin{equation*}
G=
\left[
  \begin{array}{cccc|c}
    1 & 0 & 0 & 0 &~~~~~~~~ \x_1 ~~~~~~~~ \\
    0 & 1 & 0 & 0 &~~~~~~~~ \x_2 ~~~~~~~~ \\
    \hline
      &\y_1&   &   &         \r_1          \\
      &\vdots &&  &        \vdots         \\
      &\y_{k}&& &         \r_{k}          \\
  \end{array}
\right]
\end{equation*}
generates a self-dual code $C$ over $R$ of length $2\ell+4$.
\end{enumerate}
\end{thm}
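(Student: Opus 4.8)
The plan is, in each of cases (i) and (ii), to verify first that the rows of $G$ are pairwise Hermitian-orthogonal (so $C\subseteq C^{\perp}$) and then to compute $|C|$ exactly, showing it equals $|R|^{\ell+1}$ in (i) and $|R|^{\ell+2}$ in (ii). To turn this into self-duality I would use that $R=\F_q[Y]/(Y^m-1)\cong\prod_t K_t$ is a product of finite fields (by CRT, since $\gcd(m,q)=1$), so the Euclidean size formula $|D|\,|D^{\perp}|=|R|^{n}$ holds componentwise, and hence the Hermitian one holds too, because $C^{\perp}=\overline{C}^{\perp_E}$ and conjugation is a size-preserving involution. Thus once $C\subseteq C^{\perp}$ and $|C|=|R|^{n/2}$ are established, one gets $|C^{\perp}|=|R|^{n/2}=|C|$ and therefore $C=C^{\perp}$. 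The same size formula applied to $C_0$ gives $|C_0|=|R|^{\ell}$, and $C_0=C_0^{\perp}$ yields $\langle\r_i,\r_j\rangle=0$ for all $i,j$, a fact I would use throughout.

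For (i) the orthogonality is a direct expansion using $\overline{\langle\r_i,\x\rangle}=\langle\x,\r_i\rangle$, $c\overline c=-1$, $\langle\x,\x\rangle=-1$, and $\langle\r_i,\r_j\rangle=0$: the top row has norm $1+\langle\x,\x\rangle=0$, each lower row $(y_i,cy_i,\r_i)$ has norm $(1+c\overline c)y_i\overline{y_i}=0$, the top row meets row $i$ in $\overline{y_i}+\langle\x,\r_i\rangle=0$ (by $y_i=-\langle\r_i,\x\rangle$), and rows $i\neq j$ meet in $(1+c\overline c)y_i\overline{y_j}=0$. For the size, I would note that projection onto the last $2\ell$ coordinates sends the submodule $C'$ generated by the lower $k$ rows bijectively onto $C_0$: if $\sum a_i\r_i=0$ then $\sum a_iy_i=-\langle\sum a_i\r_i,\x\rangle=0$, so the whole combination vanishes; hence $|C'|=|C_0|=|R|^{\ell}$. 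Since $c$ is a unit, the second coordinate separates $(1,0,\x)$ from $C'$, giving $R(1,0,\x)\cap C'=0$ and $C=C'\oplus R(1,0,\x)$, so $|C|=|R|^{\ell}\cdot|R|=|R|^{\ell+1}$.

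For (ii) the orthogonality of the lower rows is the crux: expanding $\langle(\y_i,\r_i),(\y_j,\r_j)\rangle$ gives $\langle\y_i,\y_j\rangle+\langle\r_i,\r_j\rangle$, and collecting terms produces $(\alpha\overline\alpha+\beta\overline\beta)(s_i\overline{s_j}+t_i\overline{t_j})$ plus cross terms proportional to $\alpha\overline\beta-\overline\alpha\beta$; the hypotheses $\alpha\overline\alpha+\beta\overline\beta=-1$ and $\alpha\overline\beta=\overline\alpha\beta$ make the first equal $-(s_i\overline{s_j}+t_i\overline{t_j})$ and annihilate the cross terms, leaving $(s_i\overline{s_j}+t_i\overline{t_j})-(s_i\overline{s_j}+t_i\overline{t_j})=0$. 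Orthogonality among and against the two top rows uses $\langle\x_1,\x_2\rangle=0$, $\langle\x_i,\x_i\rangle=-1$, and $s_i=-\langle\r_i,\x_1\rangle$, $t_i=-\langle\r_i,\x_2\rangle$, exactly as in (i). For the size, projection to the last $2\ell$ coordinates is again injective on the module $C'$ generated by the lower rows (if $\sum a_i\r_i=0$ then $S:=\sum a_is_i=0$ and $T:=\sum a_it_i=0$, so the $\y$-parts vanish), whence $|C'|=|R|^{\ell}$. Given a vanishing combination $b_1(1,0,0,0,\x_1)+b_2(0,1,0,0,\x_2)+\sum a_i(\y_i,\r_i)=0$, its third and fourth coordinates read $\alpha S+\beta T=0$ and $\beta S-\alpha T=0$; forming $\overline\alpha(\alpha S+\beta T)+\overline\beta(\beta S-\alpha T)=-S$ and $\overline\beta(\alpha S+\beta T)-\overline\alpha(\beta S-\alpha T)=-T$ forces $S=T=0$, then $b_1=b_2=0$ from the first two coordinates, and finally $\sum a_i(\y_i,\r_i)=0$ by injectivity. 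So the sum is direct and $|C|=|R|^{\ell}\cdot|R|^2=|R|^{\ell+2}$.

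I expect the main obstacle to be the cardinality bookkeeping rather than the algebraic identities: over the ring $R$ one cannot equate ``dimension'' with the number of rows, so the argument must genuinely show the new rows are independent of $C_0$ (the projection and kernel computations above) and then promote ``self-orthogonal of the right size'' to ``self-dual'' via the duality $|C|\,|C^{\perp}|=|R|^{2\ell+2}$ (resp. $|R|^{2\ell+4}$). A secondary delicate point is ensuring the relevant elements are invertible enough for the separation arguments---the unit $c$ in (i), and effectively the matrix $\bigl[\begin{smallmatrix}\alpha&\beta\\ \beta&-\alpha\end{smallmatrix}\bigr]$ over $R$ in (ii)---which is precisely what the norm conditions $c\overline c=-1$ and $\alpha\overline\alpha+\beta\overline\beta=-1$, combined with $\alpha\overline\beta=\overline\alpha\beta$, provide through the conjugate combinations above.
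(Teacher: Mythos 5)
Your proposal is correct and follows essentially the approach the paper itself relies on: the paper omits the proof of Theorem~\ref{thm:construction1}, stating it is similar to that of~\cite{KL}, and that argument is precisely yours---verify pairwise Hermitian orthogonality of the rows of $G$, then a size count that upgrades self-orthogonality to self-duality. Your cardinality bookkeeping over $R$ (the CRT decomposition of $R$ into fields, the identity $C^{\perp}=\overline{C}^{\perp_E}$ giving $|C|\,|C^{\perp}|=|R|^{n}$, and the direct-sum/projection arguments showing $|C|=|R|^{n/2}$ without assuming the $\r_i$ are independent) correctly supplies exactly the adaptation that the field-case proof of~\cite{KL} requires in the ring setting.
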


The following theorem shows that the converses of
Theorem~\ref{thm:construction1} hold for self-dual codes over $R$
with some restrictions. It can be proved in a similar way as in
~\cite{KL}, thus we omit the proof. The {\it rank} of a code $C$ means the minimum number of generators of $C$. The {\it free rank} of $C$ is defined to be the maximum of the ranks of free $R$-submodules of $C$.

\begin{thm}\label{thm:converse1}
\begin{enumerate}
\item Assume that char $(\F_q)=2$ or $q \equiv 1 \pmod 4$.\\
Any self-dual code $C$ over $R$ of length $2\ell+2$ with free rank at least two
is obtained from some self-dual code over $R$ of length $2\ell$
by the construction method in Theorem~\ref{thm:construction1} (i).

\item  Assume that $q \equiv 3 \pmod 4$ and $\ell$ is even.\\
Any self-dual code $C$ over $R$ of length $2\ell+4$ with free rank at least four
is obtained from some self-dual code over $R$ of length $2\ell$
by the construction method in Theorem~\ref{thm:construction1} (ii).
\end{enumerate}
\end{thm}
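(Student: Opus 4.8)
The plan is to prove Theorem~\ref{thm:converse1} by reversing each step of the building-up construction in Theorem~\ref{thm:construction1}, exploiting the free-rank hypothesis to guarantee that the ``peeling off'' of coordinates can be carried out over the ring $R$ rather than just over a field. I treat part (i) in detail; part (ii) is entirely analogous with a $2$-dimensional peel instead of a $1$-dimensional one.

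\medskip

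For part (i), let $C$ be a self-dual code over $R$ of length $2\ell+2$ with free rank at least two. The idea is to locate inside $C$ a free rank-one summand generated by a vector whose first coordinate is a unit, normalize it to the form $(1,0,\x)$, and then show that the orthogonality relations force the remaining generators into exactly the shape prescribed by Theorem~\ref{thm:construction1}(i). First I would use the free-rank hypothesis together with the self-duality to produce a codeword $\g_1$ with $\langle \g_1,\g_1\rangle = 0$ whose leading coordinate (after a suitable change of generators and a unit scaling) equals $1$; here the assumption char$(\F_q)=2$ or $q\equiv 1\pmod 4$ is what supplies an element $c\in R$ with $c\overline c = -1$, allowing me to arrange the first two coordinates as $(1,0)$ and to read off the relation between them. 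Concretely I would aim to show that after a permutation and a monomial adjustment, $C$ contains a vector of the form $(1,0,\x)$ with $\langle \x,\x\rangle = -1$: self-orthogonality of this vector gives precisely $1\cdot\overline{1} + 0 + \langle\x,\x\rangle = 0$, i.e.\ $\langle\x,\x\rangle=-1$, matching the hypothesis of the forward construction.

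\medskip

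Next I would project the remaining generators onto the last $2\ell$ coordinates. Take any generator $\r$ of $C$ beyond the distinguished row; writing its first two coordinates as $(a,b)$, Hermitian orthogonality with $(1,0,\x)$ yields $a + \langle \r',\x\rangle = 0$ where $\r'$ is the tail, forcing $a = -\langle\r',\x\rangle =: y$. A second free generator (this is where free rank at least two is essential, not merely rank two) lets me pin down the relation $b = cy$ by comparing inner products, so that each such row has the form $(y,cy,\r')$ exactly as in the matrix $G$. The tails $\r'$ then generate a code $C_0$ over $R$ of length $2\ell$, and a direct computation shows $C_0$ is self-dual: the Hermitian inner products of the tails are recovered from those of the full codewords minus the contribution of the first two coordinates, and self-duality of $C$ transfers to self-duality of $C_0$ by a dimension/rank count using that $R$ is a product of fields (via the Chinese Remainder decomposition of $R=\F_q[Y]/(Y^m-1)$).

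\medskip

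The main obstacle I anticipate is the ring-theoretic bookkeeping rather than any single clever step: because $R$ is not a field but a finite product of field extensions of $\F_q$, a self-dual code over $R$ need not be free, and the notions of rank and free rank genuinely differ. The free-rank hypothesis (``at least two'' in (i), ``at least four'' in (ii)) is precisely what guarantees the existence of enough unit-leading codewords to perform the normalization to $(1,0,\x)$ and to isolate the $c$-relation; without it one could not split off a free summand and the construction would fail to invert. I would therefore spend the most care verifying, componentwise across the CRT factors of $R$, that the free rank really does provide a codeword with unit first coordinate and that the resulting $C_0$ has the correct free rank to continue the induction. Since the paper states this is proved as in~\cite{KL}, I expect the argument to mirror the field case there, with the free-rank condition inserted exactly to compensate for $R$ not being a field, and I would omit the routine verifications of the inner-product identities.
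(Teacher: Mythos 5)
Your skeleton (peel off two coordinates, locate a row $(1,0,\x)$, deduce $\langle \x,\x\rangle=-1$, pass to the tails, prove the tail code self-dual by a counting argument) is indeed the shape of the argument the paper intends by its citation of \cite{KL}, but the central step of your proposal is wrong. You claim that for any generator $(a,b,\r')$ of $C$ beyond the distinguished row, orthogonality gives $a=-\langle \r',\x\rangle=:y$ and that ``a second free generator lets me pin down the relation $b=cy$ by comparing inner products.'' The first half is correct; the second is false, and your own second free generator refutes it. With free rank at least two you may take the generator matrix to begin with the rows $(1,0,\x)$ and $(0,1,\x_2)$; the second of these has $a=0$, hence $y=-\langle \x_2,\x\rangle=0$ by orthogonality with the first row, yet $b=1\neq c\cdot 0$. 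No inner-product identity constrains the second coordinate of a generator, so a generic generating set of $C$ simply does not have rows of the form $(y_i,cy_i,\r_i)$, and the deduction you describe cannot be carried out.

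What is missing is a change of generators, and this is where the unit $c$ does its real work. Consider the $R$-linear map $f:C\to R$, $f(a,b,\r')=b-ca$. Since $f(1,0,\x)=-c$ is a unit, $f$ is surjective and each $\g\in C$ may be replaced by $\g+c^{-1}f(\g)\,(1,0,\x)\in\ker f$; hence $C=R\,(1,0,\x)+\ker f$, and every element of $\ker f$ has the form $(y,cy,\r)$ with $y=-\langle \r,\x\rangle$ now genuinely forced by orthogonality with the first row. Let $C_0$ be the set of tails $\r$ of elements of $\ker f$. Self-orthogonality of $C_0$ follows from $1+c\overline{c}=0$, and self-duality requires the precise count $|\ker f|=|C|/|R|=|R|^{\ell}$ (surjectivity of $f$) together with injectivity of the projection $\ker f\to C_0$ (any $(y,cy,\0)\in C$ satisfies $y=\langle(y,cy,\0),(1,0,\x)\rangle=0$), not merely a generic ``dimension/rank count.'' Finally, the code built from $C_0$ and $\x$ by Theorem~\ref{thm:construction1}~(i) is contained in $C$ and both have cardinality $|R|^{\ell+1}$, so they coincide. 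This also corrects your accounting of the hypothesis: free rank at least two is needed only to produce the row $(1,0,\x)$ (a second unit pivot is what lets you clear its second coordinate); it plays no role in shaping the remaining rows. The same repair is needed in part (ii), where generators must be adjusted by $R$-combinations of the two distinguished rows rather than constrained by inner products.
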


\

As seen in Theorem~\ref{thm:converse1}, there is some restriction
(i.e. minimum free rank) for the converses. In order to release this
restriction, in Theorem~\ref{thm:main} we find certain conditions of
$m$ and $q$ under which the converse is true without the
restriction. The following lemma is needed for the proof of Lemma~\ref{lemma:type} and
Theorem~\ref{thm:main}, and it finds the explicit criterion for $Y^m
-1$ to have exactly two irreducible factors over $\Fq[Y]$, and it
also characterizes the unit group of $R$.

\begin{lem}\label{lemma:irreducible-units}
\begin{enumerate}
\item \label{item:condition}
$Y^m -1$ has exactly two irreducible factors over $\Fq[Y]$ if and only if
$m$ is a prime $p$ and $q$ is a primitive element of $\F_p$.
\item Assume that the condition in~(\ref{item:condition}) holds.
Then the unit group $R^\ast$ of $R$ consists of $f(Y)$ in $\Fq[Y]$ of degree $\leq p-1$ such that $f(1) \in \Fq^\ast$ and $\Phi_p(Y) \nmid f(Y),$
where $\Phi_p(Y) = Y^{p-1}+ Y^{p-2} + \cdots + Y +1.$
 Equivalently, $f(Y)$ in $\Fq[Y]$ of degree $\leq p-1$  is not a unit in $R$ if and only if $Y-1 \mid f(Y)$ or
$\Phi_p(Y) \mid f(Y)$ in $\Fq[Y]$.
Hence we have $| R^\ast | = (q-1)(q^{p-1}-1).$

\item Assume that the condition in~(\ref{item:condition}) holds. Then the
ideal $\langle Y-1 \rangle$ of $R$ has cardinality $q^{p-1}$ and the ideal
$\langle \Phi_p(Y) \rangle$ of $R$ has cardinality $q$. That is,
$\dim_{\Fq}\langle \phi^{-1}(Y-1) \rangle = p-1$ and
$\dim_{\Fq}\langle \phi^{-1}(\Phi_p(Y) \rangle = 1$.
\end{enumerate}
\end{lem}
\begin{proof}
For (i), we note that a primitive $m$th root of unity $\z$ belongs
to some extension field of $\Fq$ as $(m, q)=1$. There exists a prime
divisor $p$ of $m$. If $p \neq m$ then
$Y^m-1=(Y-1)\Phi_p(Y)(\frac{Y^m-1}{Y^p-1})$ has at least three
irreducible factors over $\Fq$. Thus, if $Y^m -1$ has exactly two
irreducible factors over $\Fq[Y]$, then we should have $m=p$. If
$m=p$, then $\Phi_p(Y)$ is irreducible if and only if all the roots of
$\Phi_p(Y)$ are Galois conjugates over $\Fq$, or equivalently, $q$ is a
primitive element of $\F_p$. The other direction is obvious.


To show (ii), by the Chinese Remainder Theorem we have the following
canonical isomorphism
$$\psi : R \longrightarrow  \Fq[Y]/(Y-1) \oplus  \Fq[Y]/(\Phi_p(Y)).$$
Then $f(Y)$ is a unit of $R$ if and only if $\psi(f(Y))$ is a unit, equivalently,
$f(1) \in \Fq^\ast$ and $\Phi_p(Y) \nmid f(Y)$, so the result follows.

(iii) is clear.
\end{proof}


\

\begin{lem}\label{lemma:GeneralType}
Let $F_1$ and $F_2$ be finite fields, and consider a ring $\R = F_1 \times F_2$.
Let $e_i \in F_i^\times$ for $i = 1, 2$ and $f_1 = (e_1, 0), f_2 = (0, e_2) \in R$. Then
every linear code over $\R$ has a generator matrix (up to permutation equivalence) as follows:
\begin{equation} \label{eqn:16}
G=
\left[
  \begin{array}{ccccc}
    I_{k_1} & A_{12} & A_{13} & A_{14}  & A_{15} \\
    O       & f_1 I_{k_2} & f_2 M_{k_2}&  B_{24}  & B_{25} \\
    O       & O & O & \alpha I_{k_3}  &  \alpha D_{35}
  \end{array}
\right],
\end{equation}
where $\alpha \in \{f_1, f_2 \}$, $I_{k_i}$ is the $k_i \times k_i$ identity matrix $i=1, 2, 3$, \
$M_{k_2}$ is a $k_2 \times k_2$ diagonal matrix with elements in
the main diagonal not contained in $\R f_1$, and all the elements of $B_{24}$ and $B_{25}$
are $0$ or nonunits in $\R$.
\end{lem}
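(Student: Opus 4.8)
The plan is to strip the ring structure down to ordinary linear algebra over the two fields via the idempotent (Chinese Remainder) decomposition of $\R$, and then run a two–phase Gaussian elimination. First I would let $\pi_1,\pi_2\colon \R\to F_1,F_2$ be the coordinate projections and set $C_i=\pi_i(C)\le F_i^{\,n}$. Since $(1,0),(0,1)$ are orthogonal idempotents summing to $1$, every $\R$-submodule splits as $C=(1,0)C\oplus(0,1)C$, so $C$ is determined by the pair $(C_1,C_2)$, and a set of rows generates $C$ over $\R$ iff its $\pi_1$-images span $C_1$ and its $\pi_2$-images span $C_2$. The key observation is that $GL_k(\R)=GL_k(F_1)\times GL_k(F_2)$: left multiplication of a generator matrix by an invertible $\R$-matrix preserves the code, and projecting to the two factors lets me perform row operations on the $\pi_1$- and $\pi_2$-pictures completely independently (a transvection with multiplier $(\lambda,0)$ touches only the first component), while a column permutation acts on both at once. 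Thus the task reduces to choosing one common permutation of the $n$ coordinates and independent bases of $C_1,C_2$ realizing the claimed block pattern.

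In Phase~1 I collect the shared pivots. As long as some coordinate $c$ lies in the support of both the not-yet-pivoted part of $C_1$ and that of $C_2$, I can, using independent row operations, manufacture a single row whose $\pi_1$- and $\pi_2$-entries at $c$ are both $1$, i.e.\ whose $\R$-entry at $c$ is $(1,1)=1$, clear column $c$ in the remaining rows on each side separately, and retire that row. Suppose this occurs $k_1$ times. Writing $a=\dim C_1\ge b=\dim C_2$ (swapping the roles of $F_1,F_2$ if needed, which is exactly the freedom $\alpha\in\{f_1,f_2\}$), the leftover parts $C_1'$ and $C_2'$ have dimensions $a-k_1$ and $b-k_1$, and when Phase~1 halts their supports are \emph{disjoint}; this disjointness drives everything that follows.

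In Phase~2 I put $C_1'$ and $C_2'$ into reduced row echelon form, so all their pivot coordinates are distinct. Pairing the $b-k_1$ pivots of $C_2'$ with $b-k_1$ of the pivots of $C_1'$ into common rows, at each such row the $\pi_2$-part vanishes at the $C_1'$-pivot (disjoint supports), giving the entry $f_1$, and the $\pi_1$-part vanishes at the $C_2'$-pivot, giving $f_2$; these rows produce the blocks $f_1 I_{k_2}$ and $f_2 M_{k_2}$ with $k_2=b-k_1$ and $M=I$ (whose diagonal entries $(1,1)\notin\R f_1$). The remaining $k_3=a-b$ pivots of $C_1'$ become one-sided rows ($\pi_2$-part zero), giving $\alpha I_{k_3}$ with $\alpha=f_1$. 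For the off-diagonal blocks: the echelon zeros of $C_1'$ force $B_{24}=O$, and at every non-pivot (``block~$5$'') coordinate disjointness of the supports of $C_1'$ and $C_2'$ means at most one of $\pi_1,\pi_2$ is nonzero on these rows, so every entry of $B_{25}$ (and of $\alpha D_{35}$) is a nonunit. Grouping the coordinates into the five blocks by the recorded permutation then yields precisely the stated matrix, with $I_{k_1}$ from the shared pivots and the $A_{1j}$ left arbitrary.

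I expect the main obstacle to be Phase~1 together with the proof that the off-diagonal entries are nonunits. The naive idea of aligning $C_1$ and $C_2$ by their leftmost pivots is wrong: a coordinate that is fully supported in both codes but is not the leftmost pivot of either can survive into a non-pivot column and contribute a \emph{unit}, breaking the $B_{25}$ condition. The resolution is that a shared pivot may be taken at \emph{any} common-support coordinate, not the leftmost one; only after all such coordinates are exhausted does the clean disjoint-support picture emerge, and it is exactly that disjointness which forces $B_{24},B_{25}$ to be nonunits and the diagonal blocks to take the forms $f_1 I_{k_2}$, $f_2 M_{k_2}$, and $\alpha I_{k_3}$.
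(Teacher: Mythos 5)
Your proposal is correct, and it reaches the normal form by a genuinely different route than the paper. The paper never splits the code: it runs Gaussian elimination directly over $\R$, classifying the rows of a generator matrix into four types (rows containing a unit, rows containing nonzero elements of both $\langle f_1\rangle$ and $\langle f_2\rangle$ but no unit, and rows supported in only one of the two ideals) and then normalizing column by column; its key intermediate claims are that after the unit block $I_{k_1}$ is extracted no unit entry survives, and that every entry of the first remaining column must lie in $\langle f_1\rangle$, since otherwise a unit could be recreated by row operations. You instead put the idempotent (CRT) splitting at the center: $C=(1,0)C\oplus(0,1)C$, so a code over $\R$ is precisely a pair of field codes $(C_1,C_2)$ on a common coordinate set, and the factorization of the row-operation group as $GL_k(F_1)\times GL_k(F_2)$ lets you reduce the two projections independently, with column permutations acting on both sides simultaneously. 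Your support-disjointness invariant at the end of Phase 1 is exactly the structural content of the paper's ``no units remain'' and ``the first column lies in $\langle f_1\rangle$'' claims, and it is what forces $B_{24}$, $B_{25}$, and $\alpha D_{35}$ to consist of nonunits. Your route buys cleaner bookkeeping: termination, the counts $k_2=\dim_{F_2}C_2-k_1$ and $k_3=\dim_{F_1}C_1-\dim_{F_2}C_2$, and the meaning of the choice $\alpha\in\{f_1,f_2\}$ (namely, which projection has the larger dimension) all fall out of dimension counts over fields, whereas the paper's closing ``repeating the above process'' and ``the rest follows in a similar way'' are left informal. What the paper's route buys is that the whole argument stays at the level of the generator matrix over $\R$, which is the form in which the result is invoked for $R=\F_q[Y]/(Y^p-1)$ in Lemma~\ref{lemma:type}. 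One small normalization you should patch: Phase 2 as written produces diagonal entries $(1,0)$ and $(0,1)$ rather than $f_1=(e_1,0)$ and $f_2$-multiples; rescaling each block-2 and block-4 row by the unit $(e_1,1)$ turns those diagonals into $f_1 I_{k_2}$ and $\alpha I_{k_3}$ with $\alpha=f_1$, and leaves the paired entries $(0,1)=f_2\cdot(1,e_2^{-1})$, so $M_{k_2}$ is the scalar matrix with diagonal entry $(1,e_2^{-1})\notin\R f_1$ rather than $I$, unless $e_1=e_2=1$. This is cosmetic, not a gap.
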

\begin{proof}
We note that $\R = F_1 \times F_2 = Rf_1 \oplus Rf_2$ is a commutative ring with unity $1_\R = (1,1)$, zero
$0_\R = (0, 0)$ and $f_1f_2 =0_\R$. In fact, the group $\R^\ast$ of units of $\R$ is
$\R - (\R f_1 \cup \R f_2) = F_1^\times \times F_2^\times$, there exist $r_1, r_2 \in \R$ such that $1_\R = r_1 f_1 + r_2 f_2$, and
$Rf_i = \langle f_i \rangle $ is a maximal ideal of $\R$ for $i = 1, 2$.

Let $G_0$ be a generator matrix for $C$. We first note that there are four possible cases
for each row of $G_0$. The first case is that a row contains a unit of $\R$, and the second one is
that a row has no units but it contains both a nonzero element in $\langle f_1 \rangle$
and a nonzero element in $\langle f_2 \rangle$.
The third case is that a row consists of only the elements
in $\langle f_1 \rangle$, and the last case is that a row contains
only the elements in $\langle f_2 \rangle$.
Below we transform $G_0$ into $G$ by column permutation and
elementary row operations.

We notice that $G_0$ can be transformed into $G_1$ such that the first $k_1$ rows
(respectively the first $k_1$ columns) of $G_1$ are equal to
the first $k_1$ rows (respectively the first $k_1$ columns) of $G$ in Eq.~(\ref{eqn:16}).
Deleting the first $k_1$ rows and the first $k_1$ columns of $G_1$,
we make $G_2$. We may assume that there is no unit component in $G_2$ (up to row equivalence);
otherwise we can increase $k_1$.

Now assume that the first row of $G_2$ is $(g_1f_1, \ g_2f_2, \ \dots)$
with $g_1 =(a_1, b_1)  \not \in \langle f_2 \rangle$ and $g_2= (a_2, b_2) \not \in \langle f_1 \rangle$.
Since $g_1 =(a_1, b_1) \not \in \langle f_2 \rangle$, we have $a_1 \neq 0$, that is, $a_1 \in F_1^\times$,
and similarly, $a_2 \in F_2^\times$.
Thus there exists ${\tilde{g_1}}=(a_1^{-1}, c_2)$ in $\R^\ast$ such that $g_1f_1 {\tilde{g_1}} = f_1$.
Multiplying the first row of $G_2$ by ${\tilde{g_1}}$, we may assume that
the first row of $G_2$ is $(f_1, \ {\tilde{g_2}} f_2, \ \dots)$ with
$\tilde{g_2} := {\tilde{g_1}}g_2 \not \in \langle f_1 \rangle$.

We claim that all the components of the first column of $G_2$ are in $\langle f_1 \rangle$.
Suppose $g=(a, b)$ is in the first column of $G_2$ with $g\not \in \langle f_1 \rangle$.
If $g\not \in \langle f_2 \rangle$, then $g$ is a unit, which is impossible. Thus,
$g \in \langle f_2 \rangle$. This leads to a unit component in $G_2$ (up to row equivalence).

We therefore may assume that all the components of the first column after $f_1$ are zero
by elementary row operations.
Likewise each component of the second column of $G_2$ is in $\langle f_2 \rangle$.
Suppose $G_2$ has the following form
\begin{equation*}
\left[
  \begin{array}{ccc}
    f_1     & \tilde{g_2} f_2 & \cdots\\
    0       & \tilde{g_2}' f_2 & \cdots\\
    \vdots  & \vdots  &
  \end{array}
\right]
\end{equation*}
for some $\tilde{g_2}=(a_2, b_2), \tilde{g_2}'= (a_2', b_2') \not \in \langle f_1 \rangle$, where
we have $b_2, b_2' \in F_2^\times$.
We add $(0, -b_2'/b_2)\times$(the first row of $G_2$) to the second row of $G_2$.
Then we have
\begin{equation*}
\left[
  \begin{array}{ccc}
    f_1     & \tilde{g_2} f_2 & \cdots\\
    0       & 0 & \cdots\\
    \vdots  & \vdots  &
  \end{array}
\right].
\end{equation*}
In this way, we may assume that the components of the second column after $f_2$
are all zero. Now assume that the second row of $G_2$ is $(0, 0, f_1, g_3 f_2, \dots)$
for some $g_3 \not \in \langle f_1 \rangle$.
In other words, $G_2$ has the following form
\begin{equation*}
\left[
  \begin{array}{ccccc}
    f_1     & \tilde{g_2} f_2 & \beta & \gamma &\cdots\\
    0       & 0 & f_1     &   g_3 f_2 & \cdots\\
    \vdots  & \vdots  & \vdots & \vdots&
  \end{array}
\right].
\end{equation*}
By the same reasoning as above, we may assume that $\beta = \gamma = 0$.
Repeating the above process, after some possible column changes, we may thus
assume that $G_2$ has the following form for some $k_2$.
\begin{equation*}
\left[
  \begin{array}{ccc}
    f_1 I_{k_2}     &  f_2 M_{k_2} & B\\
    O                &       O          & D\\
  \end{array}
\right].
\end{equation*}
The rest of the theorem follows in a similar way.
\end{proof}

\begin{lem}\label{lemma:type}
Let $m$ be a prime $p$ and $q$ be a primitive element of $\F_p$.
Then a linear code $C$ over the ring $R= \F_q[Y]/(Y^m -1)$ has a generator matrix
$G$ in the following form $($up to permutation equivalence$)$:
\begin{equation} \label{eqn:16}
G=
\left[
  \begin{array}{ccccc}
    I_{k_1} & A_{12} & A_{13} & A_{14}  & A_{15} \\
    O       & (Y-1)I_{k_2} & \Phi_p(Y)M_{k_2}&  B_{24}  & B_{25} \\
    O       & O & O & \alpha I_{k_3}  &  \alpha D_{35}
  \end{array}
\right],
\end{equation}
where $I_{k_i}$ is the $k_i \times k_i$ identity matrix $i=1, 2, 3$, \
$M_{k_2}$ is a $k_2 \times k_2$ diagonal matrix with nonzero elements in
the main diagonal over $\F_q$, \ all the elements of $B_{24}$ and $B_{25}$
are $0$ or nonunits, and $\alpha$ is $Y-1$ or $\Phi_p(Y)$.
\end{lem}
\begin{proof}
As $m$ is a prime $p$ and $q$ is a primitive element of $\F_p$, $Y^m-1$ has exactly two irreducible factors $Y-1$ and $\Phi_p(Y)$  by Lemma~\ref{lemma:irreducible-units} (i).
From Lemma~\ref{lemma:irreducible-units} and Lemma~\ref{lemma:GeneralType},
the result follows immediately.
\end{proof}

\

The following theorem shows
that the building-up construction is a complete method for
constructing all $\ell$-quasi-cyclic self-dual codes of length
$m\ell$ over $\mathbb F_q$ subject to certain conditions of $m$ and $q$.

\begin{thm} \label{thm:main}
Every self-dual code $C$ over $R = \F_q[Y]/(Y^m-1)$ of length
$2\ell+2$ can be obtained by the building-up construction given in
Theorem~\ref{thm:construction1} {\rm(}up to permutation
equivalence{\rm)}, provided that char $(\F_q)=2$ or $q \equiv 1
\pmod 4$, $m$ is a prime $p$, and $q$ is a primitive element of
$\F_p$.

Equivalently, every $\ell$-quasi-cyclic self-dual code of length
$m\ell$ over $\mathbb F_q$ can be obtained as the image under
$\phi^{-1}$ of a code over $R$ which is obtained by the building-up
construction subject to the same conditions of $m$ and $q$ as above.
\end{thm}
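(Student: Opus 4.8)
The plan is to reduce the full statement to Theorem~\ref{thm:converse1}(i) by proving that, under the stated hypotheses, \emph{every} Hermitian self-dual code $C$ over $R$ of length $2\ell+2$ is a free $R$-module of rank $\ell+1$. Once this is known, the free-rank restriction in Theorem~\ref{thm:converse1}(i) is automatically met whenever $\ell\ge 1$, and the converse building-up applies with no further hypotheses. So the entire difficulty is concentrated in establishing freeness.

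First I would exploit the ring structure. By Lemma~\ref{lemma:irreducible-units}(i) we have $Y^m-1=(Y-1)\Phi_p(Y)$ with both factors irreducible, so the Chinese Remainder isomorphism $\psi$ of Lemma~\ref{lemma:irreducible-units} identifies $R\cong F_1\times F_2$ with $F_1=\F_q[Y]/(Y-1)\cong\F_q$ and $F_2=\F_q[Y]/(\Phi_p(Y))\cong\F_{q^{p-1}}$, both of which are \emph{fields}. Hence $R$ is semisimple, and writing $n=2\ell+2$, every $R$-submodule of $R^{n}$ splits as a product $C=C_1\times C_2$ where $C_i\subseteq F_i^{\,n}$ is an $F_i$-subspace (using the idempotents $\psi^{-1}(1,0),\psi^{-1}(0,1)$). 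For such a product the free rank equals $\min(\dim_{F_1}C_1,\dim_{F_2}C_2)$, and $C$ is free precisely when $\dim_{F_1}C_1=\dim_{F_2}C_2$.

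The key step is to show that Hermitian self-duality splits along this product. The conjugation $Y\mapsto Y^{-1}$ sends each of $\langle Y-1\rangle$ and $\langle\Phi_p(Y)\rangle$ into itself, so it descends to an involution on each factor: the identity on $F_1$ (where $Y=1$), and on $F_2$ the map $\sigma\colon\zeta\mapsto\zeta^{-1}$, which is an $\F_q$-automorphism because $-1\equiv q^{i}\pmod p$ for some $i$ (here is where primitivity of $q$ enters). Consequently, for $\x=(\x^{(1)},\x^{(2)})$ and $\y=(\y^{(1)},\y^{(2)})$ the product $\langle\x,\y\rangle$ vanishes in $R$ iff the Euclidean product of $\x^{(1)},\y^{(1)}$ over $F_1$ and the $\sigma$-Hermitian product of $\x^{(2)},\y^{(2)}$ over $F_2$ both vanish; thus $C^{\perp}=C_1^{\perp}\times C_2^{\perp}$ with each dual taken in its own form. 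Self-duality of $C$ therefore forces $C_1$ to be Euclidean self-dual over $F_1$ and $C_2$ to be Hermitian self-dual over $F_2$, and the standard dimension count for a non-degenerate form gives $\dim_{F_1}C_1=\dim_{F_2}C_2=n/2=\ell+1$. Hence $C\cong R^{\ell+1}$ is free of rank $\ell+1$; equivalently, in the generator matrix of Lemma~\ref{lemma:type} the torsion blocks vanish, i.e. $k_2=k_3=0$ and $k_1=\ell+1$.

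Finally I would assemble the conclusion. For $\ell\ge 1$ the free rank $\ell+1\ge 2$, so Theorem~\ref{thm:converse1}(i) exhibits $C$ as the building-up of some self-dual code over $R$ of length $2\ell$; applying this repeatedly lowers the length by two at each stage and terminates at a self-dual code of length $2$ (the rank-one codes $[1\ \ c]$ with $c\overline{c}=-1$), which is the base case of the construction. This proves the statement over $R$, and the equivalent statement for $\ell$-quasi-cyclic self-dual codes of length $m\ell$ over $\F_q$ follows by transporting everything through the correspondence $\phi$, under which Hermitian self-dual codes over $R$ correspond exactly to Euclidean self-dual quasi-cyclic codes over $\F_q$. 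I expect the main obstacle to be precisely the third step: verifying that conjugation descends to the two residue fields as claimed—in particular that $\zeta\mapsto\zeta^{-1}$ is a field automorphism of $F_2$, which is exactly where the primitivity of $q$ modulo $p$ is essential—and that self-duality in each factor forces the half-dimension that yields freeness.
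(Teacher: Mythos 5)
Your CRT computation is correct as far as it goes, but it stops exactly where the real difficulty begins, and the pivotal claim you draw from it is false. What your dimension count shows is that $\dim_{F_1}C_1=\dim_{F_2}C_2=\ell+1$, i.e.\ that $C$ is free as an abstract $R$-module; in the canonical generator matrix of Lemma~\ref{lemma:type} this is equivalent to $k_3=0$ and $k_1+k_2=\ell+1$ \emph{only} --- it does \emph{not} force $k_2=0$, contrary to your parenthetical ``equivalently, the torsion blocks vanish.'' The point is that abstract freeness says nothing about unit pivots: the length-$2$ code $R\cdot\bigl[\,Y-1,\ c\,\Phi_p(Y)\,\bigr]$ is free of rank $1$ (its generator has zero annihilator, since $\langle Y-1\rangle\cap\langle \Phi_p(Y)\rangle=0$), yet every coordinate of every codeword lies in $\langle Y-1\rangle$ or in $\langle\Phi_p(Y)\rangle$, so no codeword has a unit entry and no generator matrix of the form $[\,1\ 0\ \mid\ \x\,]$ exists. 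The building-up converse (Theorem~\ref{thm:converse1}(i), whose proof follows \cite{KL}) is applied in the paper by exhibiting a generator matrix of the shape in Eq.~(\ref{eqn:16}) with $k_1\ge 2$, i.e.\ two rows that can be normalized to have unit pivots; that is what ``free rank at least two'' is doing there, and your argument never produces it.

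The paper's proof makes clear that this is the actual content of the theorem: its part (i) is precisely your CRT dimension count ($k_3=0$, $k_1+k_2=\ell+1$, via Theorem 4.2 of \cite{QCI}), and the work is in parts (ii) and (iii), which use self-duality in an essential way to force $k_1\ge 2$. There, one shows first that some codeword has a unit first coordinate (otherwise all first coordinates lie in one of $\langle Y-1\rangle$, $\langle\Phi_p(Y)\rangle$, whence $(\Phi_p(Y),0,\dots,0)$ or $(Y-1,0,\dots,0)$ lies in $C^\perp=C$, a contradiction), and then that $k_1=1$ is impossible because $h(Y)=(Y-1)(\overline{Y}-1)+c_1^2\Phi_p(Y)\overline{\Phi_p(Y)}$ is a unit of $R$ (as $h(1)=p^2c_1^2\ne 0$ and $\Phi_p(Y)\nmid h(Y)$), which would make a prescribed non-unit entry $\beta_1$ satisfy $\beta_1\overline{\beta_1}=-h(Y)$, hence be a unit. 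Nothing in your proposal replaces these two steps, so the reduction to Theorem~\ref{thm:converse1}(i) is not justified. (Your worry about the conjugation descending to the residue fields is misplaced: once $q$ is primitive mod $p$ so that $F_2$ is a field, $Y\mapsto Y^{-1}$ automatically induces an $\F_q$-automorphism of $F_2$; that step is routine. The obstacle is the unit-pivot question above.)
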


\begin{proof}
Let $C$ be a self-dual code of length $2\ell$ over $R$ with a
generator matrix of the form in~(\ref{eqn:16}).
Then we first show the following properties: \begin{enumerate}
\item $k_3 = 0$ and $k_1 + k_2 = \ell$,
\item $k_1 \geq 1$,
\item $k_1 \geq 2$ \ \ if \ $2\ell \geq 4$.
\end{enumerate}

\begin{enumerate}
\item
By the Chinese Remainder Theorem, we have
\begin{equation*}
R= \frac{\F_q[Y]}{(Y^p-1)} \cong  \frac{\F_q[Y]}{(Y-1)} \oplus \frac{\F_q[Y]}{\Phi_p(Y)} \cong \F_q \oplus \F_q^{p-1}.
\end{equation*}
Define $\Psi_1 : R \rightarrow \F_q$ and  $\Psi_2 : R \rightarrow \F_q^{p-1}$ as natural projections.
We extend $\Psi_1$ componentwise:
\begin{equation*}
\Psi_1 : M(R, m, n) \rightarrow M(\F_q, m, n),
\end{equation*}
where $M(R, m, n)$ and $M(\F_q, m, n)$ are the $m \times n$ matrix
spaces over $R$ and $\F_q$, respectively. Similarly we extend
$\Psi_2$. By Theorem 4.2 in~\cite{QCI}, $C = C_1 \oplus C_2$, where
$C_1$ is a self-dual code over $\F_q$ and $C_2$ is a self-dual
code over $\F_q^{p-1}$. By the proof of Theorem 6.1 in~\cite{QCIII},
$\Psi_1(G)$ and $\Psi_2(G)$ are generator matrices for $C_1$ and
$C_2$, respectively, so we have $\rank(\Psi_1(G)) = \ell =
\rank(\Psi_2(G))$. If $\alpha = Y-1$, then $\rank(\Phi_1(G)) =k_1 +
k_2$ and $\rank(\Phi_2(G)) =k_1 + k_2 + k_3,$ which shows that $k_3=
0$ and $k_1 + k_2 = \ell$. It is also shown similarly for the other
case $\alpha = \Phi_p(Y)$.
\item
We claim that there is a unit in the first component of some
codeword in $C$. Suppose there is no unit in the first component of
all codewords in $C$. Then we assume that all the first components
are in $\langle Y-1 \rangle$ or $\langle \Phi_p(Y) \rangle$.  This is because
the first components cannot contain both a nonzero element in
$\langle Y-1 \rangle$ and a nonzero element $\langle \Phi_p(Y)
\rangle$, since some $R$-linear combination of those two elements is a unit in
$R$ by  Lemma~\ref{lemma:irreducible-units} (ii).
If all the first components are in $\langle Y-1 \rangle$, then
$(\Phi_p(Y), 0, 0, \dots, 0)$ is in $C^{\perp} = C$ which is a
contradiction. Similarly, if all the first components are in
$\langle \Phi_p(Y) \rangle$, then $(Y-1, 0, 0, \dots, 0)$ is in
$C^{\perp} = C$ which is a contradiction. Therefore there is a unit
in the first component of some codeword in $C$. Hence $k_1 \geq 1$.

\item
From (i) and (ii) we have $k_1 \geq 1$ and $k_3 = 0$. We then
first observe that the column size of the last block of $G$ in
Eq.~(\ref{eqn:16}) is exactly $k_1$ as $k_1+k_2 = \ell.$ To get a
contradiction, suppose $k_1 = 1$. Then by Lemma~\ref{lemma:type},
$G$ is of the following form with $\gamma, \beta_i \in R (1\leq i
\leq \ell-1)$;
\begin{equation*}
G=
\left[
  \begin{array}{ccccc}
    1 & A_{12} & A_{13} & \gamma \\
    0 &        &        &  \beta_1        \\
    \vdots  & (Y-1)I_{k_2} & \Phi_p(Y)M_{k_2}& \vdots \\
    0  &        &        &  \beta_{\ell -1}        \\
  \end{array}
\right].
\end{equation*}
Let $\r_2$ be the second row of $G$ with $\r_2 =(0, Y-1, 0, \ldots,
0, c_1\Phi_p(Y), 0, \dots, 0, \beta_1)$ for some $c_1$ in
$\F_q^{\ast}$. Then
\begin{equation*}
0 = \langle \r_2, \r_2 \rangle = (Y-1)(\overline{Y}-1) + c_1^2 \Phi_p(Y)\overline{\Phi_p(Y)}+ \beta_1 \overline{\beta_1}.
\end{equation*}
But, we can see that $h(Y):= (Y-1)(\overline{Y}-1) + c_1^2 \Phi_p(Y)\overline{\Phi_p(Y)}= (2-Y-\overline{Y}) + c_1^2 \Phi_p(Y)\overline{\Phi_p(Y)}$
is a unit in $R$ by Lemma~\ref{lemma:irreducible-units}; in fact,
$h(1)= p^2 c_1^2 \in \Fq^\ast$ and $2-Y-\overline{Y} =
-(Y^{p-1}+Y-2)$ is not divisible by $\Phi_p(Y)$, and so $\Phi_p(Y)
\nmid h(Y).$ Therefore, $\beta_1\overline{\beta_1}$ is a unit in
$R,$ and hence $\beta_1$ is a unit. This is a contradiction because
$\beta_1$ is $0$ or a nonunit by Eq.~(\ref{eqn:16}). Therefore $k_1
\geq 2.$
\end{enumerate}

Now suppose that $C$ is a self-dual code over $R$ of length $2
\ell+2$. Then $k_1 \ge 2$ by (iii) above. Hence Eq.~(\ref{eqn:16})
gives a generator matrix in (i) of Theorem~\ref{thm:converse1}. Thus
it follows from (i) of Theorem~\ref{thm:converse1} that $C$ is
obtained from some self-dual code over $R$ of length $2 \ell$ by the
construction in  (i) of Theorem~\ref{thm:construction1}.
\end{proof}

What follows shows that in the binary cubic self-dual codes we can eliminate
the restriction for the converse of the construction in Theorem~\ref{thm:construction1}.
in other words, it shows that any binary cubic self-dual codes can be found by the building-up
construction in Theorem~\ref{thm:construction1}.

\begin{cor} \label{cor:07}
Let $R = \F_2[Y]/(Y^3 -1)$.
Let $C$ be a self-dual code over $R$ of length $2\ell+2$.
Then $C$ is obtained from some self-dual code over $R$ of length $2\ell$
by the construction method in Theorem~\ref{thm:construction1} (up to equivalence).
\end{cor}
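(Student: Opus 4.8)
The plan is to observe that the binary cubic setting is simply a special case of Theorem~\ref{thm:main}, and to verify its three hypotheses one by one. First I would record that here $q = 2$ and $m = 3$. Since $\operatorname{char}(\F_2) = 2$, the hypothesis ``$\operatorname{char}(\F_q) = 2$ or $q \equiv 1 \pmod 4$'' holds. Next, $m = 3$ is itself a prime, so we take $p = 3$. Finally I would check that $q = 2$ is a primitive element of $\F_3$: the multiplicative order of $2$ modulo $3$ equals $2 = p - 1$, because $2^2 = 4 \equiv 1 \pmod 3$ while $2 \not\equiv 1$, so $2$ generates $\F_3^\times$. Equivalently, by Lemma~\ref{lemma:irreducible-units}(i) this is exactly the condition guaranteeing that $Y^3 - 1$ has precisely the two irreducible factors $Y - 1$ and $\Phi_3(Y) = Y^2 + Y + 1$ over $\F_2[Y]$.

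With all three hypotheses in place, Theorem~\ref{thm:main} applies verbatim: every self-dual code $C$ over $R = \F_2[Y]/(Y^3 - 1)$ of length $2\ell + 2$ is obtained from some self-dual code over $R$ of length $2\ell$ by the building-up construction of Theorem~\ref{thm:construction1}(i), up to permutation equivalence. The point worth emphasizing is that this is precisely the statement with the free-rank restriction of Theorem~\ref{thm:converse1} removed, which is why the corollary deserves a separate statement even though its proof is immediate from Theorem~\ref{thm:main}.

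The only remaining wrinkle is the phrase ``up to equivalence'' in the corollary versus ``up to permutation equivalence'' in Theorem~\ref{thm:main}. Over $\F_2$ the only nonzero scalar is $1$, so every monomial transformation is a coordinate permutation; hence the two notions of equivalence coincide in the binary case, and there is nothing further to reconcile. I do not anticipate any genuine obstacle, since the corollary is a direct specialization of Theorem~\ref{thm:main}; the single thing to be careful about is confirming the primitivity of $2$ modulo $3$, which is the one numerical fact that unlocks the general theorem in this case.
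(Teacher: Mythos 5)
Your proposal is correct and matches the paper's intent exactly: the corollary is stated immediately after Theorem~\ref{thm:main} as its direct specialization to $q=2$, $m=3$, and the paper's (implicit) justification is precisely the verification that $\operatorname{char}(\F_2)=2$, that $3$ is prime, and that $2$ is a primitive element of $\F_3$. Your closing remark on equivalence is also consistent with the paper, since for codes over $R$ the paper defines equivalence to be permutation equivalence, so nothing needs reconciling.
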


\section{Construction of binary quasi-cyclic self-dual codes}
\label{sec:construction}

In this section we construct binary cubic quasi-cyclic self-dual codes and binary
quintic quasi-cyclic self-dual codes by using Theorem~\ref{thm:construction1}.

\subsection{Binary cubic self-dual codes}
\label{subsec:BinaryCubic}

A. Bonnecaze, et. al.~\cite{cubic} have studied binary cubic self-dual codes, and
they have given a partial list of binary cubic self-dual codes of lengths $\le 72$
by combining binary self-dual codes and Hermitian self-dual codes.

Using Corollary~\ref{cor:07}, we find a complete classification of
binary cubic self-dual codes of lengths up to $24$ (up to
permutation equivalence). To save space, we post the classification
up to $n=30$ in~\cite{web_Lee}.

We note that the classification of binary self-dual codes of lengths up to $32$ was given by
Pless and Sloane~\cite{PleSlo_75} and Conway, Pless and Sloane~\cite{ConPleSlo}; hence
it is possible to classify all binary cubic self-dual codes of length $32$.

Below is the summary.

\begin{thm}
Up to permutation equivalence,
\begin{enumerate}
\item there is a unique binary cubic self-dual code of length $6$.

\item there are
exactly two binary cubic self-dual codes of length $12$, one of which is
extremal.

\item there are exactly three binary cubic self-dual codes of length
$18$, one of which is extremal.

\item there are exactly sixteen binary cubic self-dual codes of length
$24$, where the extended Golay code and the odd Golay code of length
$24$ are obtained.

\end{enumerate}
\end{thm}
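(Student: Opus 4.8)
The plan is to classify the codes by induction on the length, with Corollary~\ref{cor:07} as the engine. Via the correspondence $\phi$, a binary cubic self-dual code of length $3\ell$ is the same thing as a Hermitian self-dual code over $R = \F_2[Y]/(Y^3-1)$ of length $\ell$, so the four lengths $6, 12, 18, 24$ correspond exactly to self-dual codes over $R$ of lengths $2, 4, 6, 8$. Since char$(\F_2) = 2$, Theorem~\ref{thm:construction1}(i) and its converse Corollary~\ref{cor:07} both apply: every self-dual code over $R$ of even length $n+2$ is obtained from some self-dual code over $R$ of length $n$ by the building-up construction. Hence, once I have a complete list (up to equivalence) of self-dual codes over $R$ at one even length, I can generate a complete list at the next even length by running the building-up step over all admissible input data, and then reducing modulo equivalence. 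The statements (i)--(iv) are then the four successive outputs of this process, translated back through $\phi^{-1}$.

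Concretely, note that $R \cong \F_2 \times \F_4$ has only $8$ elements and, by Lemma~\ref{lemma:irreducible-units}, exactly $3$ units, all of which satisfy $c\overline{c} = 1 = -1$. Thus the input data for one building-up step---a scalar $c \in R$ with $c\overline{c} = -1$ together with a vector $\x$ with $\langle \x, \x \rangle = -1$---range over a finite, explicitly enumerable set. First I would dispose of the base case $n = 2$: directly list the self-dual codes over $R$ of length $2$ and verify that, up to equivalence, there is exactly one, whose image under $\phi^{-1}$ is the unique binary cubic self-dual code of length $6$; this proves (i). For the inductive step, for each code $C_0$ already found at length $n$ with a fixed generator matrix $G_0 = (\r_i)$, I would form the matrix $G$ of Theorem~\ref{thm:construction1}(i) for every valid pair $(c, \x)$, with $y_i = -\langle \r_i, \x \rangle$. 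By Corollary~\ref{cor:07} the resulting family contains \emph{every} self-dual code over $R$ of length $n+2$, hence (after $\phi^{-1}$) every binary cubic self-dual code of the next length. Carrying this out for the chain $2 \to 4 \to 6 \to 8$ in Magma produces the raw lists underlying (ii)--(iv).

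The main obstacle is the equivalence reduction, not the generation. Distinct inputs $(C_0, c, \x)$ routinely yield equivalent output codes, so the exact counts $2, 3, 16$ only emerge after duplicates are removed, and completeness of the list is what justifies that no code is missed---this is exactly what Corollary~\ref{cor:07} supplies, so no separate ``nothing is missed'' argument is needed beyond checking that the step has been run over all admissible $(c,\x)$. As emphasized in Section~\ref{sec:Preliminaries}, permutation equivalence of codes over $R$ is strictly finer than monomial equivalence of their binary images, so the reduction must be performed at the level of the binary cubic codes under monomial equivalence rather than at the coarser $R$-level; I would use Magma's equivalence testing, pre-sorted by invariants such as the weight enumerator, to partition the generated codes into classes. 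Finally, for each surviving class I would compute the minimum weight to flag the extremal code claimed in (ii) and (iii), and at length $24$ verify directly, by matching weight enumerators, that the two distinguished members are the $[24,12,8]$ extended Golay code and the $[24,12,6]$ odd Golay code. The chief risks are purely computational: ensuring the enumeration of $(c,\x)$ is genuinely exhaustive, and applying the equivalence test at the correct (monomial, binary) level.
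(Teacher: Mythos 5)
Your overall strategy---iterate the building-up construction over $R=\F_2[Y]/(Y^3-1)$ with Corollary~\ref{cor:07} as the completeness guarantee, then classify the binary images---is the same as the paper's, and your base case, the enumeration of the data $(c,\x)$, and the identification of the extremal and Golay codes are all fine. The genuine gap is in how you maintain the inductive invariant: you propose to reduce each \emph{intermediate} list modulo monomial equivalence of the binary images, and then claim that building up from those representatives still ``contains every self-dual code over $R$ of length $n+2$.'' Corollary~\ref{cor:07} does not support that. Its completeness statement is up to \emph{permutation equivalence over $R$}, and that is the equivalence which transports building-up data: if a coordinate permutation $\sigma$ of $R^{2\ell}$ maps $C_0$ to $D$, then the code built from $C_0$ with $(c,\x)$ is $R$-equivalent to the one built from $D$ with $(c,\sigma(\x))$, because the Hermitian inner product is permutation-invariant; hence a list complete up to $R$-equivalence stays complete after one building-up step. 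Binary equivalence does not transport this data: as the paper stresses in Section~\ref{sec:Preliminaries}, equivalence of the binary images $\phi^{-1}(C_0)$ and $\phi^{-1}(C_0')$ does not imply equivalence of $C_0$ and $C_0'$ over $R$ --- the binary permutation relating them need not conjugate one quasi-cyclic ($R$-module) structure into the other. So if at length $n$ you discard $C_0'$ because its binary image matches that of a kept code $C_0$, the codes that Corollary~\ref{cor:07} builds from $C_0'$ are \emph{not} guaranteed to have binary images equivalent to anything built from $C_0$, and equivalence classes can be silently lost at length $n+2$. (Your own wording betrays the slip: having correctly said $R$-equivalence is strictly finer, you then call it ``the coarser $R$-level'' and reduce at the genuinely coarser binary level inside the induction.)

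The repair, which is what the paper actually does, is to keep the intermediate lists complete at the $R$-level: generate all outputs of the building-up step over $R$, deduplicating only by set equality of codes (or by $R$-permutation equivalence, both of which are safe), and pass to equivalence of the binary cubic images only when reporting the final classification at each of the lengths $6,12,18,24$. This is exactly why the paper describes its procedure as first constructing \emph{all} self-dual codes over the ring $R$ and only then considering equivalence of the corresponding quasi-cyclic binary codes, and it is also why the computation becomes prohibitive beyond length $24$. With that correction, your induction is sound and coincides with the paper's proof.
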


For even $\ell \ge 10$ we have tried to construct as many codes as
possible due to computational complexity. Recall that we have
summarized the number of extremal cubic self-dual codes of lengths
$\le 66$ in Table~\ref{tab:bin_cubic_sd}.

Using the following lemma, we determine possible weight enumerators of a binary
$\ell$-quasi-cyclic self-dual code of length $p\ell$ with $p$ a prime.

\begin{lem}\rm{(\cite[Ch.~16, Sec.~6]{MacSlo})} Let $C$ be a binary code and
$H$ any subgroup of $\mbox{Aut}(C)$. If $A_i$ is the total number of codewords
in $C$ of weight $i$, and $A_i(H)$ is the number of codewords which are fixed
by some non-identity element of $H$, then
\[
A_i \equiv A_i(H) \pmod{|H|}.
\]
\end{lem}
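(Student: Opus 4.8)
The plan is to prove the congruence by the standard orbit--stabilizer decomposition of the action of $H$ on the set of weight-$i$ codewords. First I would observe that since every element of $H \subseteq \mbox{Aut}(C)$ preserves Hamming weight, the group $H$ acts on the set $W_i := \{\, c \in C : \wt(c) = i \,\}$ of all weight-$i$ codewords; consequently $A_i = |W_i|$ decomposes as a sum of $H$-orbit sizes, one summand per orbit.

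Next I would invoke the orbit--stabilizer theorem: for each $c \in W_i$ the orbit $Hc$ has cardinality $|H|/|\mbox{Stab}_H(c)|$, which is a divisor of $|H|$. The crucial dichotomy is that $|Hc| = |H|$ precisely when $\mbox{Stab}_H(c) = \{1\}$, i.e.\ when $c$ is fixed by no non-identity element of $H$; otherwise $\mbox{Stab}_H(c)$ is nontrivial, $|Hc|$ is a proper divisor of $|H|$, and every member of $Hc$ is fixed by some non-identity element. I would confirm that this property is constant along each orbit --- elements of a single orbit have conjugate, hence equinumerous, stabilizers --- so the orbits partition cleanly into ``regular'' orbits of size exactly $|H|$ and the remaining orbits, whose members are exactly the codewords enumerated by $A_i(H)$.

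Finally, summing the orbit sizes gives $A_i = r\,|H| + A_i(H)$, where $r$ is the number of regular orbits, and therefore $A_i \equiv A_i(H) \pmod{|H|}$, as claimed. I do not anticipate a genuine obstacle: the argument is a clean counting identity. The one point that warrants an explicit line is the verification that ``being fixed by some non-identity element of $H$'' is an $H$-invariant condition (so that $A_i(H)$ really counts a union of whole orbits), and this is immediate from the conjugacy of stabilizers within a common orbit.
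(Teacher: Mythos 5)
Your proof is correct and is essentially the argument the paper has in mind: the paper's remark following the lemma divides the weight-$i$ codewords into those fixed by some non-identity element of $H$ and the rest, and then appeals to the orbit-counting argument of MacWilliams--Sloane, which is exactly your observation that the latter class is a union of free $H$-orbits, each of size $|H|$. Your explicit check that "fixed by some non-identity element" is an orbit-invariant condition (via conjugacy of stabilizers) is the same point the paper leaves implicit, so nothing is missing.
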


We remark that in \cite[Ch.~16, Sec.~6]{MacSlo} $A_i(H)$ is defined as the number of codewords which are fixed by some element of $H$.
Since the identity of $H$ always fixes any codeword, we need to
consider some non-identity element of $H$. Thus the codewords of weight $i$ can be divided into
two classes, those fixed by some {\em non-identity} element of $H$,
and the rest. Then just follow the proof of \cite[Ch.~16, Sec.~6]{MacSlo}.

\begin{cor} \label{cor:3_divide_Ai}
Let $C$ be a binary $\ell$-quasi-cyclic self-dual code of length
$p\ell$ with $p$ a prime. If the weight $i$ is not divisible by $p$,
then $A_i$ is divisible by $p$. In particular, $A_d$ is a multiple
of $p$ if $d$ is not divisible by $p$.
\end{cor}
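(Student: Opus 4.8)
The plan is to apply the preceding lemma with the cyclic group $H = \langle T^\ell \rangle$ generated by the $\ell$-fold shift operator. First I would verify that $H$ is a legitimate choice of subgroup: since $C$ is $\ell$-quasi-cyclic it is invariant under $T^\ell$, so $H \subseteq \mbox{Aut}(C)$. Viewed as a permutation of the $n = p\ell$ coordinates, $T^\ell$ is a cyclic shift by $\ell$ positions, so its order equals $n/\gcd(n,\ell) = p\ell/\ell = p$. Because $p$ is prime, $|H| = p$ and every non-identity element of $H$ generates all of $H$.

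Next I would invoke the lemma to obtain $A_i \equiv A_i(H) \pmod{p}$ and then identify $A_i(H)$. Here the primality of $p$ is used again: a codeword fixed by \emph{some} non-identity element of $H$ is automatically fixed by the whole group, hence by $T^\ell$ itself, so $A_i(H)$ is precisely the number of weight-$i$ codewords of $C$ that are invariant under $T^\ell$.

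The key structural observation is that a vector in $\F_2^{p\ell}$ is fixed by $T^\ell$ if and only if it is periodic of period $\ell$, i.e.\ it consists of $p$ identical consecutive blocks of length $\ell$; in the paper's notation, $c_{0j} = c_{1j} = \cdots = c_{p-1,j}$ for every $j$. The weight of such a vector is exactly $p$ times the weight of a single block, hence divisible by $p$. Consequently, if $p \nmid i$ there are no weight-$i$ codewords fixed by $T^\ell$, so $A_i(H) = 0$ and the lemma yields $A_i \equiv 0 \pmod{p}$. The ``in particular'' assertion is then just the special case $i = d$, where $d$ is the minimum distance of $C$.

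There is no substantial obstacle in this argument; the only two points requiring care are confirming that the order of $T^\ell$ as a coordinate permutation is exactly $p$ (so that $|H| = p$) and the passage from ``fixed by some non-identity element'' to ``fixed by $T^\ell$,'' both of which rest on the primality of $p$.
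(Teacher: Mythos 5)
Your proof is correct and follows essentially the same route as the paper: both apply the preceding lemma to a subgroup $H$ of order $p$ generated by a fixed-point-free permutation of order $p$ and conclude $A_i(H)=0$ when $p\nmid i$. The only cosmetic difference is that the paper cites \cite[Proposition A.1]{QCI} for the existence of such a permutation in $\mbox{Aut}(C)$, whereas you exhibit it concretely as $T^{\ell}$ and verify its order and the block-periodicity of its fixed vectors by hand, which makes the argument self-contained.
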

\begin{proof}
We know from~\cite[Propositon A.1]{QCI} that if $p$ denotes a prime,
a binary code $C$ of length $\ell p$ is $\ell$-quasi-cyclic if and
only if $\mbox{Aut}(C)$ contains a fixed-point free (fpf)
permutation of order $p$. Hence $C$ contains an fpf permutation
$\sigma$ of order $p$. Let $H=\left <\sigma \right>$ whose order is
$p$. Since $\sigma$ is an fpf of order $p$ and any codeword of
weight $i$ with $p \nmid i$ cannot be fixed by any non-identity
element of $H$, we have $A_i(H) = 0$. Therefore by the above lemma,
$A_i \equiv 0 \pmod{p}$.
\end{proof}

\begin{enumerate}

\item $\ell = 10$, $[30, 15, 6]$ codes \\

There are three weight enumerators for self-dual $[30,15,6]$
codes~\cite{ConSlo}:

$W_1=1 + 19 y^6 + 393y^8 + 1848y^{10}+ 5192y^{12} + \cdots$.

$W_2=1+ 27y^6 + 369y^8 + 1848y^{10} + 5256y^{12}+ \cdots$.

$W_3=1+ 35y^6+ 345y^8 + 1848y^{10} + 5320y^{12} + \cdots$.

It is known~\cite{ConPleSlo},~\cite{ConSlo} that there are precisely
three codes with $W_1$, a unique code with $W_2$, and precisely nine
codes with $W_3$. Only two cubic self-dual $[30,15,6]$ codes are
given in~\cite{cubic}. We have constructed three codes with $W_1$
whose group orders are $576, 1152, 18432$ respectively. We have also
constructed five codes with $W_3$ whose group orders are $30, 192,
1440, 40320, 645120$. To save space, we post these codes
in~\cite{web_Lee}. In fact, these are all the cubic self-dual $[30,15,6]$ codes by the following calculation.

On the other hand, we have noticed that Munemasa has posted all
binary self-dual $[30,15]$ codes in~\cite{Mun2010Web}. Let $C_i$ be
the $i$th code in his list. By Magma, $C_i$ has $d=6$ if and only if
$i \in \{11, 61, 98, 119, 174, 184, 217, 350, 379, 397, 419, 487,
697\}$. We have further checked that the three codes with $W_1$
denoted by $C_{397}, C_{419}, C_{697}$ are all cubic and only five
out of the nine codes with $W_3$, denoted by $C_{119}, C_{174},
C_{184}, C_{350}, C_{487}$, are cubic. We have also checked that
there is no cubic code with $W_2$.

\begin{thm}
Up to permutation equivalence,
 there are exactly $8$ binary cubic self-dual $[30,15,6]$ codes.
\end{thm}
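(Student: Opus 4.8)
The plan is to prove the count by a two-sided argument: a constructive lower bound showing there are at least eight pairwise inequivalent binary cubic self-dual $[30,15,6]$ codes, and a completeness argument --- resting on the known full classification of binary self-dual $[30,15,6]$ codes --- showing there are no others. For the lower bound, I would apply the building-up construction of Theorem~\ref{thm:construction1}(i), available here because the ambient field is binary (so $\operatorname{char}(\F_q)=2$), iterated through lengths $2,4,\dots,30$, and retain only those outputs that are self-dual, have minimum distance $6$, and are cubic. Since any code of the form $\phi^{-1}$ of a code over $R=\F_2[Y]/(Y^3-1)$ is automatically $10$-quasi-cyclic, cubicity is built in; moreover, by Corollary~\ref{cor:07} \emph{every} binary cubic self-dual code of length $2\ell+2$ arises this way, so the construction search is exhaustive among cubic codes. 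I expect to obtain three codes realizing the weight enumerator $W_1$ and five realizing $W_3$, with automorphism group orders $576,1152,18432$ and $30,192,1440,40320,645120$ respectively. As these orders are pairwise distinct within each weight-enumerator class and $W_1\neq W_3$, the eight codes are pairwise inequivalent, giving at least eight.

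For completeness, I would invoke the fact recorded above, from~\cite{ConSlo} and~\cite{ConPleSlo}, that there are exactly thirteen binary self-dual $[30,15,6]$ codes in total --- three with $W_1$, one with $W_2$, and nine with $W_3$ --- all enumerated in Munemasa's list~\cite{Mun2010Web}. The decisive tool is the criterion of~\cite[Proposition A.1]{QCI}: a binary length-$3\ell$ code is cubic precisely when its automorphism group contains a fixed-point-free permutation of order $3$. I would test each of the thirteen candidates $C_i$ for such an automorphism by Magma~\cite{CanPla}; this should isolate the three $W_1$ codes, exactly five of the nine $W_3$ codes, and none of the single $W_2$ code, matching the eight codes already constructed. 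Hence the constructed list is complete and the count is exactly eight.

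The main obstacle is the completeness half, and in particular the elimination of the $W_2$ code. The divisibility criterion of Corollary~\ref{cor:3_divide_Ai} is too weak for this: one checks that $W_2$ has $A_8=369$ and $A_{10}=1848$, both divisible by $3$ (as are the corresponding coefficients of $W_1$ and $W_3$), so the $\pmod 3$ congruence on $A_i$ for $3\nmid i$ is satisfied by all three weight enumerators and cannot by itself rule out a cubic code with enumerator $W_2$. Consequently there is no purely combinatorial shortcut in sight, and the non-cubicity of the $W_2$ code --- together with the precise count of five among the nine $W_3$ codes --- must be certified by the exhaustive fixed-point-free automorphism test run over the complete classification. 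The correctness of the argument therefore hinges on having that full list of thirteen self-dual $[30,15,6]$ codes available and on the associated automorphism-group computations.
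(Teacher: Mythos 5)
Your proposal is correct and follows essentially the same route as the paper: a constructive lower bound via the building-up construction (yielding three codes with $W_1$ and five with $W_3$, distinguished by weight enumerators and automorphism group orders), combined with a completeness check that filters Munemasa's full list of binary self-dual $[30,15]$ codes by minimum distance $6$ and then tests each of the thirteen survivors for a fixed-point-free automorphism of order $3$ via \cite[Proposition A.1]{QCI} in Magma. Your added observation that Corollary~\ref{cor:3_divide_Ai} cannot eliminate the $W_2$ code (since $A_8=369$ and $A_{10}=1848$ are both divisible by $3$) correctly explains why the computational automorphism test is unavoidable, which is exactly what the paper does.
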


\item $\ell = 12$, \ $[36, 18, 8]$ codes \\

There are two weight enumerators for self-dual $[36,18,8]$ codes (refer to~\cite{ConSlo},~\cite{MelGal}):

$W_1=1+ 225y^8+ 2016y^{10} + \cdots$.

$W_2=1+ 289y^8 + 1632y^{10} + \cdots$.

For cubic codes, $p=3$ should divide $A_8$ by
Corollary~\ref{cor:3_divide_Ai}. Therefore any binary cubic
self-dual $[36,18,8]$ code has weight enumerator $W_1$. Bonnecaze, et.
al.~\cite{cubic} gave one code $CSD_{36}$ with $W_1$ and group order
$288$. We have found $9$ inequivalent cubic self-dual $[36,18,8]$
codes with $W_1$ and groups orders $18, 24, 36, 48, 96, 240, 288,
384,$ and $12960$. We have checked by Magma that our code with group
order $288$ is equivalent to $CSD_{36}$. Hence there are at least
$9$ extremal cubic self-dual codes of length $36$. These codes are posted in ~\cite{web_Lee}.

It is shown~\cite{MelGal} that there are exactly $41$ binary
self-dual $[36,18,8]$ codes and exactly $25$ codes among them have
$A_8=225$. However we have noticed that many generator matrices
in~\cite{MelGal} do not produce self-dual codes. This was confirmed
by Gaborit~\cite{Gab2011} and was corrected in his
website~\cite{Gab2011_Web}. From the corrected list of the binary
self-dual $[36,18,8]$ codes~\cite{Gab2011_Web}, we have checked that
only $13$ of the $25$ self-dual $[36,18,8]$ codes with $A_8=22$ are
cubic by further investigating the existence of a fixed point free
automorphism of order $3$ in each code. Let $C_i$ be the $i$th code
from the list of~\cite{Gab2011_Web}. Then $C_i$ is cubic if and only
if $i \in \{1, 3, 6, 7, 8, 9, 11, 12, 14, 16, 21, 22, 25 \}$.

Independently, Harada and Munemas~\cite{HarMun2010} have recently
classified all binary self-dual $[36,16]$ codes including the
extremal self-dual $[36,16,8]$ codes. They confirmed that there are
exactly $41$ extremal self-dual $[36,16,8]$ codes and exactly $25$
codes among them have $A_8=225$. Let $C_i$ be the $i$th code from
the list of~\cite{HarMun2010}. Then $C_i$ is cubic if and only if $i
\in \{1, 4, 12, 13, 15, 16, 19, 21, 24, 26, 27, 31, 33\}$.

\begin{thm}
Up to permutation equivalence,
 there are exactly $13$ binary cubic self-dual $[36,18,8]$ codes.
\end{thm}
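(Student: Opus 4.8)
The plan is to combine a weight-enumerator divisibility argument with the known complete classification of binary self-dual $[36,18,8]$ codes and a finite automorphism check. First I would use Corollary~\ref{cor:3_divide_Ai} to cut down the candidate weight enumerators. An extremal self-dual $[36,18,8]$ code has one of the two weight enumerators $W_1$ (with $A_8 = 225$) or $W_2$ (with $A_8 = 289$). Since a cubic self-dual code here is $\ell$-quasi-cyclic of length $p\ell$ with $p=3$ and $\ell = 12$, and the weight $8$ is not divisible by $3$, Corollary~\ref{cor:3_divide_Ai} forces $3 \mid A_8$. As $225 = 3^2 \cdot 5^2$ is divisible by $3$ while $289 = 17^2$ is not, every binary cubic self-dual $[36,18,8]$ code must have weight enumerator $W_1$. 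This reduces the problem to deciding which codes with $A_8 = 225$ are cubic.

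Next I would invoke the complete classification of binary self-dual $[36,18,8]$ codes obtained by Harada--Munemasa (and independently from the corrected Gaborit list): there are exactly $41$ such codes, of which exactly $25$ have $A_8 = 225$. These $25$ codes are then the only candidates for being cubic, so the theorem is reduced to a finite check on an explicitly known list.

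The decisive step is the cubic criterion. By Proposition A.1 of~\cite{QCI}, a binary code of length $3\ell$ is $\ell$-quasi-cyclic (equivalently, cubic) if and only if its automorphism group contains a fixed-point-free permutation of order $3$. For each of the $25$ candidate codes I would therefore compute, using Magma, whether $\mathrm{Aut}(C)$ contains such an element. Indexing the codes as $C_i$ in the reference list, this check returns exactly the $13$ cubic codes recorded in the displayed index set. As an independent cross-check, I would run the building-up construction of Corollary~\ref{cor:07} to generate cubic self-dual codes of length $36$ directly, and confirm that every code so produced is permutation-equivalent to one of these $13$ and that no further cubic code arises.

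The main obstacle is that the argument is inescapably computational and rests on the correctness of the external classification: the surrounding discussion already notes that several generator matrices in the Mallows--Gallaher list failed to yield self-dual codes and had to be corrected. Thus the two delicate points are (a) using a verified-complete list of the $25$ codes with $A_8 = 225$, and (b) performing the fixed-point-free order-$3$ automorphism test exhaustively and correctly, since a single missed or spurious automorphism would change the final count of $13$.
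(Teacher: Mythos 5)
Your proposal matches the paper's own argument essentially step for step: Corollary~\ref{cor:3_divide_Ai} forces $3 \mid A_8$ and hence weight enumerator $W_1$; the corrected Gaborit list and the independent Harada--Munemasa classification supply the $25$ candidate codes with $A_8 = 225$; and the fixed-point-free order-$3$ automorphism test (Proposition A.1 of~\cite{QCI}) singles out exactly $13$ cubic codes, with the building-up construction serving as a consistency check. The only slip is attributional --- the flawed original list is due to Melchor--Gaborit~\cite{MelGal}, not ``Mallows--Gallaher'' --- but this does not affect the argument.
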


\item $\ell = 14$, \ $[42,21,8]$ codes \\

There are two weight enumerators for self-dual $[42,21,8]$ codes~\cite{BHM, Huf}:

$W_1=1+164y^8 + 679y^{10} + \cdots$.

$W_2=1+(84 + 8 \beta)y^8+ (1449-24 \beta)y^{10} + \cdots$~ $(\beta \in \{0, 1, . . . , 22, 24, 26, 28, 32, 42\})$.

By Corollary~\ref{cor:3_divide_Ai}, $3$ should divide $A_8$.
Therefore any binary cubic self-dual $[42,21,8]$ code has weight
enumerator $W_2$, where $3$ divides $84+ 8 \beta$, that is, $\beta$
is a multiple of $3$. Bonnecaze, et. al.~\cite{cubic} gave one code
with $W_2$ and $\beta=0$. We have found $14$ inequivalent cubic
self-dual $[42,21,8]$ codes with $\beta=0,3,6, 9, 12$ with group
orders $3,6, 12,$ and $36$. It is shown that if a self-dual code satisfies $W_2$ with
$\beta \in \{24, 26, 28, 32, 42\}$, it is equivalent to one of the
eight codes in~\cite[Table 1]{BHM}. If it is cubic, then $\beta$
should be $\beta=24$ or $42$ by the divisibility condition on
$\beta$. For $\beta=24$, there are three codes denoted by $C_{24,1},
C_{24,2}, C_{24,3}$~\cite{BHM}. We have checked that only $C_{24,2}$
has a fixed point free automorphism of order $3$; hence it is cubic.
For $\beta=42$, there is only one code denoted by
$C_{42}$~\cite{BHM}. We have checked that it has a fixed point free
automorphism of order $3$; hence it is cubic.

We have found~\cite[Table 5]{BouYanRus} where it is shown that there are exactly
$1569$ binary self-dual $[42,21,8]$ codes with a fixed point free automorphism of order $3$ and weight enumerator $W_2$. This table confirms the above calculations.

\begin{thm}
Up to permutation equivalence,
 there are exactly $1569$ binary cubic self-dual $[42,21,8]$ codes.
\end{thm}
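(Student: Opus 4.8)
The plan is to reduce the problem to two essentially independent facts: a divisibility constraint that pins down the only possible weight enumerator, and a census of self-dual $[42,21,8]$ codes admitting a fixed-point-free automorphism of order three. First I would record the dictionary between ``cubic'' and automorphisms: by \cite[Proposition A.1]{QCI}, a binary code of length $3\ell$ is $\ell$-quasi-cyclic (cubic) if and only if its automorphism group contains a fixed-point-free permutation of order $3$. Thus classifying binary cubic self-dual $[42,21,8]$ codes up to permutation equivalence is the same as classifying, up to permutation equivalence, those binary self-dual $[42,21,8]$ codes whose automorphism group contains a fixed-point-free element of order $3$.

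The second step is to eliminate one of the two admissible weight enumerators. Since $42 = 3 \cdot 14$, Corollary~\ref{cor:3_divide_Ai} applies with $p=3$: any weight $i$ with $3 \nmid i$ forces $3 \mid A_i$, and in particular $3 \mid A_8$. For $W_1$ we have $A_8 = 164 \equiv 2 \pmod 3$, so no cubic code can have weight enumerator $W_1$. For $W_2$ we have $A_8 = 84 + 8\beta$; since $84 \equiv 0$ and $8 \equiv 2 \pmod 3$, the condition $3 \mid A_8$ is equivalent to $3 \mid \beta$. Hence every binary cubic self-dual $[42,21,8]$ code has weight enumerator $W_2$ with $\beta$ a multiple of $3$, and the problem becomes one of counting exactly these.

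The final and decisive step is to count the codes with weight enumerator $W_2$ carrying a fixed-point-free automorphism of order three, and here I would proceed in two directions and reconcile them. On the constructive side, the building-up method of Theorem~\ref{thm:construction1} produces a large supply of such codes, realizing the small admissible values $\beta \in \{0,3,6,9,12\}$, while for the large values $\beta \in \{24,42\}$ one tests the explicit codes of \cite[Table 1]{BHM} for a fixed-point-free automorphism of order three. For completeness one invokes the classification in \cite[Table 5]{BouYanRus}, which enumerates precisely the binary self-dual $[42,21,8]$ codes having a fixed-point-free automorphism of order three together with weight enumerator $W_2$, and reports exactly $1569$ of them. Combining this with the first two steps---cubic is equivalent to possessing a fixed-point-free order-three automorphism, and every cubic code has weight enumerator $W_2$---yields exactly $1569$ codes.

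The main obstacle is the completeness assertion, not the existence of the codes: producing cubic codes is routine via the building-up construction, but proving that no further cubic self-dual $[42,21,8]$ code exists requires either an exhaustive search through all binary self-dual $[42,21,8]$ codes paired with an order-three fixed-point-free automorphism test, or the external census of \cite{BouYanRus}. Establishing that the constructed list is exhaustive---and verifying its consistency with \cite[Table 1]{BHM} for the sporadic large-$\beta$ values---is where the real work lies, since the divisibility argument alone only constrains the weight enumerator and cannot by itself deliver the exact count.
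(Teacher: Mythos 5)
Your proposal is correct and follows essentially the same route as the paper: reduce "cubic" to the existence of a fixed-point-free order-3 automorphism via \cite[Proposition A.1]{QCI}, use Corollary~\ref{cor:3_divide_Ai} to force weight enumerator $W_2$ with $3 \mid \beta$, construct examples by building-up and check the sporadic large-$\beta$ codes of \cite[Table 1]{BHM}, and obtain the exact count $1569$ from the classification in \cite[Table 5]{BouYanRus}. You also correctly identify that the completeness of the count rests on the external census of \cite{BouYanRus}, with the paper's own constructions serving as confirmation rather than the decisive step.
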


\item $\ell = 16$, \ $[48,24,10]$ codes \\

There are two weight enumerators for self-dual $[48,24,10]$ codes~\cite{Huf}:

$W_1=1+704y^{10} + 8976y^{12} + \cdots$.

$W_2=1+ 768y^{10} + 8592y^{12} + \cdots$.

By Corollary~\ref{cor:3_divide_Ai}, any binary cubic self-dual $[48,24,10]$ code has weight enumerator $W_2$. Bonnecaze, et. al.~\cite{cubic} gave one code with $W_2$ with no group order given. We have found four inequivalent codes with $W_2$ and group orders $3,6, 12,$ and $24$.
See Table~\ref{tab:cubic_48_54_66} for details,
where the first column gives the code name, the second and third columns the $X$ vector and the base matrix in Theorem~\ref{thm:construction1}, the fourth column the corresponding weight enumerator of the binary code, and the last column the order of the automorphism group of the binary code.

\item $\ell = 18$, \ $[54,27,10]$ codes \\

There are two weight enumerators for self-dual $[48,24,10]$ codes~\cite{Huf}:

$W_1=1 +(351-8\beta)y^{10} + (5031+ 24\beta)y^{12}+ \cdots$~ $(0 \le \beta \le 43)$.

$W_2=1+(351-8 \beta)y^{10} + (5543+24\beta)y^{12} + (43884+ 32\beta)y^{14} + \cdots$~ $(12 \le \beta \le 43)$.

Any binary cubic self-dual $[54,27,10]$ code has $W_1$ or $W_2$ as its weight enumerator; in both cases, $3$ divides $\beta$ with the same reasoning as above.
Bonnecaze, et. al.~\cite{cubic} gave two codes, one with $W_1$ and $\beta=0$ and the other with $W_2$ and $\beta=12$ (and group order $3$). We have found four inequivalent codes with $W_1$ and $\beta=0,3,6,9$ (all group orders $3$) and three inequivalent codes with $W_2$ and $\beta=12, 15, 18$ (all group orders $3$).
See Table~\ref{tab:cubic_48_54_66} for more details.

\item $\ell = 20$ \\
We have not found any self-dual $[60,30, 12]$ codes even though there are at least three cubic self-dual $[60,30, 12]$ codes~\cite{cubic} with $W_2$ and $\beta=10$ in the notation of~\cite{Huf}.

\item $\ell = 22$, \ $[66, 33, 12]$ codes \\
There are three possible weight enumerators for self-dual $[66, 33, 12]$ codes~\cite{Huf}:

$W_1 = 1 + 1690y^{12} + 7990y^{14} + \cdots,$

$W_2 = 1 + (858 + 8\beta)y^{12} + (18678 - 24\beta)y^{14} + \cdots$
($0 \le \beta \le 778$), and

$W_3 = 1 + (858 + 8 \beta)y^{12} + (18166 - 24\beta)y^{14} + \cdots$
($14 \le \beta \le 756$).

By Corollary~\ref{cor:3_divide_Ai}, any binary cubic self-dual
$[66,33,12]$ code should have weight enumerator $W_2$ with $\beta$
in the given range as above since $A_{14}$ should be divisible by
$3$. Bonnecaze, et. al.~\cite{cubic} gave two codes with $W_2$ and
$\beta=21, 30$. Using $G_{20}$ with various values of $X$ in
Table~\ref{tab:cubic_48_54_66}, we have constructed five
inequivalent codes with $W_2$ and $\beta=17, 23, 26, 43, 46$. All
have automorphism group of order $3$.
\end{enumerate}

The following generator matrices $G_{14}$, $G_{16}$, and $G_{20}$ are used in Table~\ref{tab:cubic_48_54_66}
for constructing binary extremal cubic self-dual codes of $n=48, 54, 66$. \\

$G_{14}=$ {\tiny
 $\left(
 \begin{array}{l}
   1,   0,   Y^2 + Y ,  Y + 1,   1,   Y,   Y^2 + Y + 1,   Y,   Y,   0,   Y^2 + Y,   Y^2,   1,  0\\
   Y,   Y,   1,   0,   Y^2,   Y^2 + Y + 1,   Y^2 + Y,   Y^2 + Y + 1,   Y^2 + Y + 1,   Y^2 +1,   Y^2,   Y + 1,   Y,   Y\\
   Y^2 + 1,   Y^2 + 1,   0,   0,   1,   0,   0,   Y^2 + Y,   Y,   Y^2,   Y^2 + 1,   Y,   Y^2 +Y + 1,   Y^2 + Y + 1\\
   1,   1,   Y^2 + 1,   Y^2 + 1,   Y^2 + 1,   Y^2 + 1,   1,   0,   Y^2,   Y^2 + Y,   Y^2,   Y^2 + Y + 1,   Y^2 + 1,   Y^2 + Y\\
   1,   1,   1,   1,   0,   0,   Y^2 + Y + 1,   Y^2 + Y + 1,   1,   0,   Y + 1,   Y^2 + 1,   Y^2 + Y,   Y^2 + Y + 1\\
   Y^2 + Y + 1,   Y^2 + Y + 1,   Y,   Y,   1,   1,   Y + 1,   Y + 1,   1,   1,   1,   0,   Y +1,   Y^2 + Y + 1\\
   Y^2,   Y^2,   1,   1,   Y^2 + Y + 1,   Y^2 + Y + 1,   Y^2,   Y^2,   Y^2,   Y^2,   Y,   Y,  1,   1\\
\end{array}
\right)$ }
\\

$G_{16}=$ {\tiny
 $\left(
 \begin{array}{l}
   1,   0,   Y^2 + Y,   0,   Y^2,   Y^2,   Y^2 + Y + 1,   Y + 1,  1,   Y,   Y^2,   Y^2,   1,  Y^2,   Y + 1,  0\\
   Y + 1,   Y + 1, 1,   0,   Y^2,   Y,   1,   Y^2,   Y^2 + 1,   1,   Y,   Y^2,   Y^2 + Y + 1,   Y,   Y + 1,   Y +1\\
   Y,   Y,   Y + 1,   Y + 1, 1,   0,   Y^2,   Y^2 + Y + 1,   Y^2 + Y,   Y^2 + Y + 1,   Y^2 + Y + 1,   Y^2 + 1,Y^2,  Y + 1,   Y,   Y\\
   Y^2 + Y,  Y^2 + Y,  Y^2 + Y,   Y^2 + Y,  0,   0,     1,   0,   0,   Y^2 + Y,   Y,   Y^2,  Y^2 + 1,   Y,   Y^2 + Y+ 1,   Y^2 + Y + 1\\
   0,   0, 1,   1,  Y^2 + 1,  Y^2 + 1,  Y^2 + 1,   Y^2 + 1,   1,   0,   Y^2,   Y^2 + Y,   Y^2,   Y^2 + Y + 1,   Y^2+ 1,   Y^2 + Y\\
   1,   1, Y^2 + Y,   Y^2 + Y, 1,   1,  0,   0,   Y^2 + Y + 1,   Y^2 + Y + 1,   1,   0,   Y + 1,   Y^2 + 1,   Y^2 +Y,   Y^2 + Y + 1\\
   Y + 1,   Y + 1, Y^2 + Y + 1,   Y^2 + Y + 1,  Y,   Y,  1,   1,   Y + 1,   Y + 1,   1,   1,   1,   0,   Y + 1,Y^2 + Y + 1\\
   Y^2 + Y + 1,   Y^2 + Y + 1,  Y,   Y,  1,   1,  Y^2 + Y + 1,   Y^2 + Y + 1,   Y^2,   Y^2,   Y^2,   Y^2,   Y,   Y, 1,   1\\
\end{array}
\right)$ }

$G_{20}=$ {\tiny
 $\left(
 \begin{array}{l}
    1, 0, 0 , Y^2 + 1 , Y^2 + Y + 1 , Y , Y^2 , Y^2 , 1 , Y^2 + Y,0 , 1 , Y + 1 , 1 , Y^2 , Y , Y + 1 , Y^2 + 1 , 1 , 1\\
    Y, Y, 1 , 0 , Y + 1 , Y + 1 , Y + 1 , 1 , Y + 1 , 1,Y^2 + Y + 1 , Y , Y^2 , Y^2 + Y , Y^2 , 1 , Y + 1 , Y^2 , 1 , Y + 1\\
    Y^2+Y+1, Y^2+Y+1, Y^2 + 1 , Y^2 + 1 , 1 , 0 , Y^2 + Y , 0 , Y^2 , Y^2, Y^2 + Y + 1 , Y + 1 , 1 , Y , Y^2 , Y^2 , 1 , Y^2 , Y + 1 , 0\\
    0, 0, 0 , 0 , Y + 1 , Y + 1 , 1 , 0 , Y^2 , Y ,1 , Y^2 , Y^2 + 1 , 1 , Y , Y^2 , Y^2 + Y + 1 , Y , Y + 1 , Y + 1\\
    Y+1, Y+1, Y^2 + Y , Y^2 + Y , Y , Y , Y + 1 , Y + 1 , 1 , 0,Y^2 , Y^2 +Y + 1 , Y^2 + Y , Y^2 + Y + 1 , Y^2 + Y + 1 , Y^2 + 1 , Y^2 , Y + 1 , Y , Y\\
    0, 0, Y , Y , Y^2 + Y , Y^2 + Y , Y^2 + Y , Y^2 + Y , 0 , 0 ,1 , 0 , 0 , Y^2+ Y , Y , Y^2 , Y^2 + 1 , Y , Y^2 + Y + 1 , Y^2 + Y + 1\\
    Y^2+Y+1, Y^2+Y+1, 0 , 0 , 0 , 0 , 1 , 1 , Y^2 + 1 , Y^2 + 1,Y^2 + 1 , Y^2 + 1 , 1 , 0 , Y^2 , Y^2 + Y , Y^2 , Y^2 + Y + 1 , Y^2 + 1 , Y^2 + Y\\
    Y+1, Y+1, 1 , 1 , 1 , 1 , Y^2 + Y , Y^2 + Y , 1 , 1,0 , 0 , Y^2 + Y + 1 , Y^2 + Y + 1 , 1 , 0 , Y + 1 , Y^2 + 1 , Y^2 + Y , Y^2 + Y + 1\\
    Y^2 + Y + 1,   Y^2 + Y + 1, Y^2 + 1 , Y^2 + 1 , Y + 1 , Y + 1 , Y^2 + Y + 1 , Y^2 + Y + 1 , Y , Y,1 , 1 , Y + 1 , Y + 1 , 1 , 1 , 1 , 0 , Y + 1 , Y^2 + Y + 1\\
    Y^2 + Y + 1,   Y^2 + Y + 1, 1 , 1 , Y^2 + Y + 1 , Y^2 + Y + 1 , Y , Y , 1 , 1,Y^2 + Y + 1 , Y^2 + Y + 1 , Y^2 , Y^2 , Y^2 , Y^2 , Y , Y , 1 , 1\\
\end{array}
\right)$
}

\begin{table}
\centering \caption{Binary extremal Type I cubic self-dual codes of length $n=48,54,66$}
\label{tab:cubic_48_54_66}
{\tiny{
\[
\begin{tabular}{|c|c|c|c|c|}
\hline

\hline Codes $C_{n,i}$ & $X$ vector & Using Gen & Weight & $|\mbox{Aut}|$ \\
& & Matrix & Enumerator & \\
\hline
$C_{48,1}$ & $(Y, Y + 1, Y^2 + 1, Y^2, 0, 1, 0,$ & $G_{14}$ & $W_2$& 3 \\
& $Y^2, 0, Y^2 + Y, 0,  Y^2 + Y,  Y^2,  0)$ & &&\\
\hline
$C_{48,2}$ & $(Y^2 + Y, 0, Y^2 + Y + 1, 1, Y, 1, Y + 1,$ & $G_{14}$ & $W_2$& 24 \\
& $Y^2 + 1, Y^2 + 1, 1, Y^2 + Y + 1, 1, Y^2, Y^2)$ && & \\
\hline
$C_{48,3}$ & $(Y^2, Y^2 + Y + 1, Y^2, 0, Y, 0, Y^2 + Y + 1,$ & $G_{14}$ & $W_2$ & 12 \\
& $Y^2 + Y, 0, Y + 1, Y, Y^2 + 1, Y, Y^2 + 1)$ & && \\
\hline
$C_{48,4}$& $(0, 0, Y^2 + Y, Y^2 + Y + 1, Y + 1, Y,$ & $G_{14}$ & $W_2$ & 6 \\
& $1, Y^2 + Y, Y^2 + 1, Y^2, Y + 1, Y^2, Y, 1)$ & & &\\
\hline
$C_{54,1}$ & $(Y^2 + Y + 1, Y^2 + 1, Y^2+1, Y^2+1,
Y^2 + 1, Y^2 + Y, Y^2 + Y + 1,$ & $G_{16}$ &  $W_2$, $\beta=18$ & 3\\
&$Y^2 + Y + 1, Y^2 + Y, Y^2, 0, Y + 1, 1, 0, Y^2 + Y + 1, Y^2 + Y + 1)$  &&&\\
\hline
$C_{54,2}$& $(Y + 1, Y + 1, Y + 1, 1, Y + 1, 1, Y^2 + Y + 1, Y, Y^2,$ & $G_{16}$ & $W_1$, $\beta=9$ & 3 \\
& $Y^2 + Y, Y^2, 1, Y + 1, Y^2, 1, Y + 1)$ & & &\\
\hline
$C_{54,3}$ & $(Y, Y^2, Y + 1, 0, 1, Y^2, Y, Y^2 + 1, 1,  Y^2 + Y, 1, Y,$ & $G_{16}$ & $W_2$, $\beta=15$ & 3 \\
& $Y^2 + Y + 1, 1, Y^2 + Y + 1, Y^2 + 1)$ & & &\\
\hline
$C_{54,4}$& $(Y^2 + Y, Y^2 + Y + 1, Y^2 + Y, 1, Y^2 + 1, Y + 1, 0, Y^2 + Y,$ & $G_{16}$  &$W_1$, $\beta=3$ & 3\\
& $Y^2, 1, 1, 0, Y^2 + 1, Y, 1, Y^2 + 1)$ & & &\\
\hline
$C_{54,5}$& $(1, Y, Y, Y, Y + 1, Y^2, Y, 0, Y + 1, Y^2 + Y, Y^2,$ & $G_{16}$ & $W_1$, $\beta=0$ & 3 \\
& $Y^2 + Y + 1, Y^2, Y^2 + Y, 0, Y + 1)$ & & &\\
\hline
$C_{54,6}$& $(Y^2, 0, Y^2, Y^2 + Y + 1, Y^2 + Y, 0, 0,$ & $G_{16}$ & $W_2$, $\beta=12$& 3 \\
& $Y^2 + 1, 0, Y^2 + Y, Y, 0, Y^2, Y^2 + Y, Y + 1, 0)$ & & &\\
\hline
$C_{54,7}$ & $(Y^2 + Y + 1, Y^2 + Y, Y^2 + Y, Y + 1, Y, Y^2, Y^2 + Y, Y^2 + Y + 1,$ & $G_{16}$ & $W_1$, $\beta=6$ &3 \\
& $Y^2 + Y + 1, Y^2 + 1, Y^2 + Y, Y^2, Y^2 + 1, Y^2 + Y + 1, Y + 1, 0)$ && &\\
\hline
$C_{66,1}$ & $(Y^2+1, 1, Y+1, 1, 0, 0, Y^2+Y+1, 0, 1, Y^2$,  & $G_{20}$ & $W_2$, $\beta=46$ & 3 \\
& $1, Y, Y+1, 1, 1, Y^2+Y, 0, Y+1, 0, 0)$ && &\\
\hline
$C_{66,2}$ & $(Y^2+Y+1, Y^2+Y+1, 0, 1, Y, Y^2, Y^2, 1, Y^2+Y+1, Y^2+Y+1,$ &$G_{20}$ & $W_2$, $\beta=17$ & 3 \\
& $Y^2+1, Y^2, Y^2+1, 0, Y^2+Y+1, Y^2+Y+1, Y^2+1, 0, Y, Y+1)$ &&&\\
\hline
$C_{66,3}$& $(0,0,Y^2+Y,1, Y^2+Y, Y^2+Y+1, Y+1, 1, Y+1, Y, Y^2+Y+1,$ &$G_{20}$ & $W_2$, $\beta=23$ & 3 \\
& $Y, Y^2+1, Y+1, Y^2, Y+1, Y+1, Y^2+Y+1, Y, Y+1)$ &&&\\
\hline
$C_{66,4}$ & $(Y^2, Y^2+1, Y^2, Y^2, Y+1, 0, 1, 0, 1, Y^2+1, Y^2+1,$ & $G_{20}$ &$W_2$, $\beta=26$ & 3 \\
& $1, Y^2+Y, Y+1, 1, Y, Y+1, Y^2+1, 0, Y^2)$ &&& \\
\hline
$C_{66,5}$&$(Y, Y, Y^2, Y^2+1, Y+1, Y, 0, Y+1, Y^2+Y+1, 0, Y^2+1,$&  $G_{20}$ & $W_2$, $\beta=43$ & 3 \\
& $Y^2+Y+1, 1, Y, Y^2+Y+1, Y^2+Y, 0, Y^2+1, Y^2+Y, 0)$ & &&\\
\hline

\hline
\end{tabular}
\]
}}
\end{table}

\comment{

\begin{rem}
It is reasonable to stop at length $24$. If we want to classify cubic elf-dual codes of length $30$,
then we should construct all self-dual codes over $R$ of length $10$ from he self-dual codes which are constructed
from $1920$ self-dual codes over $R$ of length $6$.
We think that it takes too long.
\end{rem}

We have made random construction from $G_2 = [1~ 1]$ and
obtained $G_4, G_6, \cdots, G_{16}$ successively using building-up construction with $c=1$.
The following matrices, $L$ and $R$, are the left half and the right half of $G_{16}$.
Namely, $G_{16} = [L ~|~ R]$.
{\scriptsize{
\begin{equation*}
L = \left[
  \begin{array}{cccccccc}
1&   0&   Y^2 + Y&   0&   Y^2&   Y^2&   Y^2 + Y + 1&   Y + 1 \\
Y + 1&   Y + 1& 1&   0&   Y^2&   Y&   1&   Y^2 \\
Y&   Y&   Y + 1&   Y + 1& 1&   0&   Y^2&   Y^2 + Y + 1 \\
Y^2 + Y&  Y^2 + Y&  Y^2 + Y&   Y^2 + Y&  0&   0&     1&   0 \\
0&   0& 1&   1&  Y^2 + 1&  Y^2 + 1&  Y^2 + 1&   Y^2 + 1 \\
1&   1& Y^2 + Y&   Y^2 + Y& 1&   1&  0&   0 \\
Y + 1&   Y + 1& Y^2 + Y + 1&   Y^2 + Y + 1&  Y&   Y&  1&   1 \\
Y^2 + Y + 1&   Y^2 + Y + 1&  Y&   Y&  1&   1&  Y^2 + Y + 1&   Y^2 + Y + 1
  \end{array}
\right],
\end{equation*}
}}
{\scriptsize{
\begin{equation*}
R = \left[
  \begin{array}{cccccccc}
1&   Y&   Y^2&   Y^2&   1&  Y^2&   Y + 1&  0 \\
  Y^2 + 1&   1&   Y&   Y^2&   Y^2 + Y + 1&   Y&   Y + 1&   Y + 1 \\
  Y^2 + Y&   Y^2 + Y + 1&   Y^2 + Y + 1&   Y^2 + 1&   Y^2&  Y + 1&   Y&   Y \\
 0&   Y^2 + Y&   Y&   Y^2&  Y^2 + 1&   Y&   Y^2 + Y + 1&   Y^2 + Y + 1 \\
  1&   0&   Y^2&   Y^2 + Y&   Y^2&   Y^2 + Y + 1&   Y^2 + 1&   Y^2 + Y \\
  Y^2 + Y + 1&   Y^2 + Y + 1&   1&   0&   Y + 1&   Y^2 + 1&   Y^2 + Y&   Y^2 + Y + 1 \\
    Y + 1&   Y + 1&   1&   1&   1&   0&   Y + 1&   Y^2 + Y + 1 \\
    Y^2&   Y^2&   Y^2&   Y^2&   Y&   Y&   1&   1
  \end{array}
\right].
\end{equation*}
}}
The corresponding binary cubic codes of $G_{10}, G_{12}, G_{14}$, and $G_{16}$
are $[30, 15, 6]$, $[36, 18, 8]$, $[42, 21, 8]$, and $[48, 24, 10]$ Type I codes
and all are optimal Type I codes. The weight enumerators are respectively
{\scriptsize{
\begin{eqnarray*}
&&1 + 19y^6 + 393y^8 + 1848y^{10} + 5192y^{12} + 8931y^{14} + \cdots, \\
&&1 + 225y^8 + 2016y^{10} + 9555y^{12} + 28800y^{14} + 55755y^{16} + 69440y^{18} + \cdots, \\
&&1 + 84y^8 + 1449y^{10} + 10640y^{12} + 50256y^{14} + 158718y^{16} + 337540y^{18} + 489888y^{20} +  \cdots, \\
&&1 + 768y^{10} + 8592y^{12} + 57600y^{14} + 267831y^{16} +
    871168y^{18} + 1997040y^{20} + 3264768y^{22} + 3841680y^{24}  + \cdots.
\end{eqnarray*}
}}
Note that
there are three different weight enumerators for $[30, 15, 6]$ Type I codes.
There are two different weight enumerators for $[36, 18, 8]$ Type I codes.
There are two possible forms for the weight enumerators of $[42, 21, 8]$ Type I codes.
Our $[42, 21, 8]$ Type I code corresponds to $W_2$ with $\beta =0$ in~\cite{Huf}.
There are two possible weight weight enumerators for $[48, 24, 10]$ Type I codes.
Our $[48, 24, 10]$ Type I code corresponds to $W_2$ in~\cite{Huf}.
} 

\comment{
\begin{rem}
It is easy to see that if two linear codes $C_1$ and $C_2$ over $R$ are equivalent,
then the corresponding quasi-cyclic codes $\phi^{-1}(C_1)$ and $\phi^{-1}(C_2)$ are equivalent.
However we have no answer for the converse. We guess that the converse may be false.
\end{rem}
}

\comment{
\begin{rem}
We wanted to find all inequivalent codes over $R$.
But we have no software tools for checking equivalence of codes over $R$.
Instead, we gave all inequivalent QC codes over $\F_2$ using Magma.
We know that there is one to one correspondence between a linear code over $R$
and a QC code over $\F_2$.
And it is clear that if two linear codes $C_1$ and $C_2$ over $R$ are equivalent,
then the corresponding QC codes $\phi^{-1}(C_1)$ and $\phi^{-1}(C_2)$ are equivalent.
But we don't know the answer for the following question. If two QC codes $C_1$ and $C_2$ over $\F_2$
are equivalent, then are $\phi(C_1)$ and $\phi(C_2)$ equivalent over $R$?
If the answer for the question is yes,
then the classification over $R$ and the classification over $\F_2$ are the same.
But we think that the answer is no.
\end{rem}

} 

\subsection{Binary quintic self-dual codes}
\label{subsec:class-quintic}

In this subsection, we give the classification of binary quintic
self-dual codes of even lengths up to $30$ (up to permutation
equivalence) by using Theorem~\ref{thm:main} since $2$ is a
primitive element of $\F_5$. Using the known classification of binary self-dual codes of lengths up to $30$, one can also classify binary quintic self-dual codes of these lengths.
To save space, we post the classification result in~\cite{web_Lee}.
We know from~\cite[Table F]{ConPleSlo} that there are exactly $13$
optimal binary self-dual $[30, 15, 6]$ codes with three distinct
weight enumerators $W_1, W_2, W_3$ from
Section~\ref{subsec:BinaryCubic}. Exactly nine of them have the
weight enumerator $W_3=1+ 35y^6+ 345y^8 + 1848y^{10} + 5320y^{12} +
\cdots$. By Corollary~\ref{cor:3_divide_Ai}, $W_3$ is the only
possible weight enumerator for a binary extremal quintic self-dual
code. We have checked that only four codes are binary quintic optimal self-dual codes of length $30$.

\begin{thm}
Up to permutation equivalence,
\begin{enumerate}
\item there is a unique quintic self-dual code of length $10$.
\item there are
exactly three quintic self-dual codes of length $20$, two of which are
extremal.
\item there are exactly eleven quintic self-dual codes of length
$30$, four of which are optimal.
\end{enumerate}
\end{thm}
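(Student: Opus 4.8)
The plan is to turn the completeness theorem for the building-up construction into an exhaustive, recursively organized search, and then to use the divisibility corollary to settle the extremality statements. First I would record that the hypotheses of Theorem~\ref{thm:main} are met in the binary quintic case: $q=2$ (so char$(\F_q)=2$), $m=5$ is prime, and $2$ is a primitive root modulo $5$ (its powers $2,4,3,1$ exhaust $\F_5^\ast$). Consequently every binary quintic self-dual code of length $5\ell$ is $\phi^{-1}$ of a self-dual code over $R=\F_2[Y]/(Y^5-1)$ of length $\ell$, and by Theorem~\ref{thm:main} every such $R$-code of length $2\ell+2$ is produced from one of length $2\ell$ by the construction of Theorem~\ref{thm:construction1}(i). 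Since binary lengths $10,20,30$ correspond to $R$-lengths $2,4,6$, this yields the tower $2\to4\to6$ over $R$, which is exactly the inductive ladder needed.

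Next I would carry out the enumeration explicitly. The base consists of the self-dual codes over $R$ of length $2$, generated by $[1\ c]$ with $c\overline{c}=-1=1$; running over all such $c\in R$ and passing to the binary image gives the length-$10$ codes. Then I would apply Theorem~\ref{thm:construction1}(i) repeatedly, ranging over every admissible pair $(c,\x)$ with $c\overline{c}=-1$ and $\langle\x,\x\rangle=-1$, to generate all length-$4$ and then all length-$6$ self-dual codes over $R$. For each code produced I would compute its binary image under $\phi^{-1}$ and sieve the resulting list by permutation equivalence (over $\F_2$ monomial equivalence coincides with permutation equivalence). This is a finite computation, which I would run in Magma; the inequivalent representatives that survive are counted as $1$, $3$, and $11$ at binary lengths $10,20,30$, giving the three enumeration claims.

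For the extremality refinements I would invoke Corollary~\ref{cor:3_divide_Ai} with $p=5$. At length $20$ I would simply read off the minimum distances of the three codes and observe that exactly two attain the optimal value $d=4$. At length $30$ the corollary does the decisive pruning: the three weight enumerators $W_1,W_2,W_3$ of optimal self-dual $[30,15,6]$ codes have $A_6=19,27,35$ respectively, and since $5\nmid6$ the corollary forces $A_6\equiv0\pmod5$; this eliminates $W_1$ and $W_2$ and leaves only $W_3$. Hence a binary quintic optimal (i.e.\ $d=6$) code of length $30$ must have weight enumerator $W_3$, so it lies among the nine known $W_3$ codes; testing each of those nine for a fixed-point-free automorphism of order $5$ (equivalently, for being $6$-quasi-cyclic) leaves exactly four, which are the four optimal codes asserted among the eleven.

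The main obstacle is organizing the enumeration so that completeness and tractability coexist. Completeness is guaranteed abstractly by Theorem~\ref{thm:main}, but the search over all vectors $\x$ at the length-$6$ stage is the real computational bottleneck, and the equivalence reduction must be performed on the binary side: as already noted after Theorem~\ref{thm:converse1}, inequivalence of the $R$-codes is neither necessary nor sufficient for inequivalence of their binary images, so one cannot safely prune over $R$. The role of Corollary~\ref{cor:3_divide_Ai} is precisely to keep the optimality analysis clean, replacing a second exhaustive search by a short divisibility check that isolates the admissible weight enumerator.
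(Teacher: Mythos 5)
Your proposal is correct and follows essentially the same route as the paper: completeness of the building-up construction via Theorem~\ref{thm:main} (valid since $2$ is a primitive element of $\F_5$), exhaustive enumeration over $R=\F_2[Y]/(Y^5-1)$ along the ladder of $R$-lengths $2\to 4\to 6$ sieved by permutation equivalence on the binary side, and Corollary~\ref{cor:3_divide_Ai} with $p=5$ to rule out $W_1,W_2$ and isolate the four optimal length-$30$ codes among the nine known $W_3$ codes by testing for a fixed-point-free automorphism of order $5$. One minor slip in your closing remark: since equivalence of codes over $R$ implies equivalence of their binary images, $R$-inequivalence is in fact \emph{necessary} (though not sufficient) for binary inequivalence, so merging $R$-equivalent duplicates would be a safe optimization --- but this does not affect the correctness of your procedure, which performs the final sieve on the binary side exactly as the paper does.
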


Making successive random choices of ${\bf{x}}$ from $G_{6,2}$ by using the building-up construction in Theorem~\ref{thm:construction1} with $c=1$,
we obtain $G_{12} = [L ~|~ R]$, where $L$ and $R$ are given below.

{\tiny{
\begin{equation*}
L = \left[
  \begin{array}{cccccc}
1&   0&   Y^4 + Y^2 + Y&   Y^4 + Y^3 + Y^2 + 1&   Y^4 + Y^3 + Y^2&   Y^3 + Y \\

Y^4 + Y^2 + Y&   Y^4 + Y^2 + Y&  1&   0&   Y^4 + Y^2&   Y^3 + Y + 1 \\
Y^4 + Y^3 + Y^2 + Y + 1&   Y^4 + Y^3 + Y^2 + Y + 1&  Y^4 + Y^3 + Y + 1&
Y^4 + Y^3 + Y + 1&    1&   0 \\
Y^4 + Y^2&   Y^4 + Y^2&  1&   1&  Y^4&   Y^4 \\
Y^4 + Y^2 + 1&   Y^4 + Y^2 + 1&  Y^3 + 1&   Y^3 + 1&  Y^4 + Y^3 + Y^2 + Y&   Y^4 + Y^3 + Y^2 + Y \\
Y^4 + Y^2 + Y&   Y^4 + Y^2 + Y& Y^3 + Y^2 + 1&   Y^3 + Y^2 + 1&  Y^4 + Y^2 + 1&   Y^4 + Y^2 + 1
  \end{array}
\right].
\end{equation*}
}}

{\tiny{
\begin{equation*}
R = \left[
  \begin{array}{cccccc}
Y^4 + Y^3 + Y&   Y^4 + Y^2 + Y&   Y^4 + 1&   Y^3 + Y^2 + Y&   Y^4 + Y^2 + Y&   Y  \\
   Y^2 + Y& Y^4 + Y^3 + Y^2 + Y&   Y^4 + Y^3 + Y^2 + Y&   Y^2 + Y&   Y^4 + Y^3&   Y^4 + Y^2\\
   Y^4 + Y^3 + Y^2& Y^3 + Y&   Y&   Y^2&   Y^3 + Y&   Y^4 + Y \\
               1&   0&             0&       0&     Y + 1&   Y^3 + Y + 1 \\
   Y^4 + Y^2 + 1&     Y^4 + Y^2 + 1&   1&   0&   0&   1 \\
   Y^2&     Y^2&   1&   1&   1&   1
  \end{array}
\right].
\end{equation*}
}}

We verify that the corresponding binary quintic self-dual code of
$G_{12}$ has parameters $[60, 30, 12]$. The deletion of the first
two columns and the first row of $G_{12}$ is denoted by $G_{10}$,
and similarly we obtain $G_8$ from $G_{10}$. Their corresponding
binary quintic self-dual codes have parameters $[40, 20, 8]$ (Type
II) and $[50, 25, 10]$. We summarize their corresponding weight
enumerators of $G_8, G_{10}, G_{12}$ respectively as follows.

{\scriptsize{
\begin{eqnarray*}
&&1 + 285y^{8} + 21280y^{12} + 239970y^{16} + 525504y^{20} + \cdots, \\
&&1 + 516y^{10} + 7720y^{12} + 55880y^{14} + 291990y^{16} +
    1077265y^{18} + 2810424y^{20} + 5287640y^{22} + 7245780y^{24} + \cdots, \\
&&1 + 3195y^{12} + 29760y^{14} + 284625y^{16} + 1728000y^{18} +
    7769400y^{20} + 26392320y^{22} + 67226760y^{24} + 130060800y^{26} \\
&&   \hspace{0.2cm}  + 193151475y^{28} + 220449152y^{30} + \cdots.
\end{eqnarray*}
}}

The first one is the unique extremal weight enumerator,
the second weight enumerator corresponds to $W_2$ with $\beta =2$ in~\cite{Huf}, and
the third weight enumerator corresponds to $W_2$ with $\beta =10$ in~\cite{Huf}.
The orders of the automorphism groups are $10$, $5$, and $20$ respectively.

\section{Construction of quasi-cyclic self-dual codes over various finite fields}
\label{sec:construction-various}

In this section we find quasi-cyclic self-dual codes over $\mathbb F_2$, $\mathbb F_3$, $\mathbb F_4$ and $\mathbb F_5$ which are optimal or have best known self-dual codes by applying the building-up construction in Theorem~\ref{thm:construction1}.

 \subsection{Cubic self-dual codes over $\mathbb F_4$ and $\mathbb F_5$}
\label{susec:cubic-various}

In~\cite{HanKimLeeLee}, we have given cubic self-dual codes over $\mathbb F_4$ and $\mathbb F_5$ that are optimal or have best known parameters. In particular, we have the following.

\begin{thm}
There are at least two monomially inequivalent $[24, 12, 9]$
self-dual codes over $\F_5$, one of which is cubic and denoted by
$CSD_{24}^5$.
\end{thm}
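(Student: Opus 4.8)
The plan is to prove this existence statement by explicit construction combined with a computational inequivalence check, which is the natural approach given the surrounding material and the authors' reliance on Magma. The statement is essentially a computational claim about the existence of two monomially inequivalent self-dual $[24,12,9]$ codes over $\F_5$, one of which arises as a cubic (i.e.\ $3$-quasi-cyclic) code. First I would produce the cubic code $CSD_{24}^5$: since $24 = 3 \cdot 8$, a cubic self-dual code of length $24$ over $\F_5$ corresponds under $\phi$ to a self-dual code of length $\ell = 8$ over the ring $R = \F_5[Y]/(Y^3-1)$ with respect to the Hermitian inner product. Because $5 \equiv 1 \pmod 4$, Theorem~\ref{thm:construction1}~(i) applies, so I would exhibit such a code over $R$ by the building-up construction, starting from a small self-dual code over $R$ and successively adjoining rows using a suitable $c$ with $c\overline{c}=-1$ and vectors $\x$ with $\langle \x,\x\rangle = -1$. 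Applying $\phi^{-1}$ yields the cubic self-dual code over $\F_5$, and one verifies by direct computation that its minimum distance is $9$.

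Next I would supply a second self-dual $[24,12,9]$ code over $\F_5$ that is \emph{not} cubic, so that the two codes are genuinely distinct witnesses. A convenient source is the known classification or tables of self-dual codes over $\F_5$ of length $24$ (for instance a code with a different automorphism group structure, or one that demonstrably lacks a fixed-point-free automorphism of order $3$). The heart of the argument is then to show the two codes are \emph{monomially inequivalent}. Here I would invoke a monomial-equivalence test: over $\F_5$ the relevant notion is equivalence up to coordinate permutations composed with scaling by nonzero elements of $\F_5$, and Magma can settle this directly. Alternatively, one can separate the two codes by a monomial invariant, the cleanest being the order of the monomial automorphism group $|\mathrm{Aut}|$, or the complete weight enumerator, either of which is preserved under monomial equivalence.

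The main obstacle is the inequivalence verification rather than the construction. Exhibiting one self-dual $[24,12,9]$ code is routine once the building-up machinery is in place, but certifying that two such codes are monomially inequivalent requires either an exhaustive equivalence search (expensive, since the monomial group over $\F_5^{24}$ is large) or the identification of a provably monomial-invariant that distinguishes them. The practical resolution, consistent with the authors' methodology throughout the paper, is to compute a distinguishing invariant, and the most robust choice is to check that $CSD_{24}^5$ admits a fixed-point-free automorphism of order $3$ (guaranteed by \cite[Proposition A.1]{QCI}, since it is cubic) whereas the comparison code does not; this property is visibly preserved under monomial equivalence and hence separates the two codes. I would present the explicit generator data for $CSD_{24}^5$ and record the distinguishing invariant (group order or the cubic property) for both codes, deferring the verification to Magma as is done elsewhere in the paper.
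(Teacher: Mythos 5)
Your proposal is correct and is essentially the paper's own approach: in fact the paper gives no proof of this theorem at all, importing it verbatim from the authors' earlier work \cite{HanKimLeeLee}, where $CSD_{24}^5$ is produced exactly as you describe --- by the building-up construction of Theorem~\ref{thm:construction1} (i) over $R=\F_5[Y]/(Y^3-1)$ (valid since $5\equiv 1\pmod 4$; e.g.\ $c=2$ satisfies $c\overline{c}=-1$), followed by Magma verification of the minimum distance and of monomial inequivalence with a previously known $[24,12,9]$ self-dual code over $\F_5$. One small caution: Proposition A.1 of \cite{QCI} as quoted in the paper concerns \emph{binary} codes, so over $\F_5$ your distinguishing invariant should be phrased as ``the monomial automorphism group contains an element whose permutation part is a fixed-point-free permutation of order $3$'' (this is preserved under monomial equivalence because projection of a monomial matrix onto its permutation part is a group homomorphism), or else you can simply fall back on the cleaner invariants you already mention, the complete weight enumerator or $|\mathrm{Aut}|$.
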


Applying Construction $A$~\cite{ConSlo_sphere}, we can construct
the odd Leech lattice $O_{24}$ using the idea in~\cite{HarKha}.
In~\cite[Prop. 4]{HarKha} it is shown that
for a self-dual $[24,12, d\ge 8]$ code $C$ over $\F_5$, the
corresponding lattice $A_5(C)$ by Construction $A$ is the odd Leech
lattice $O_{24}$ if there is no codeword $\bf{x} \in C$ with
$n_0({\bf{x}})=14$, $n_1({\bf{x}})=10$, and $n_2({\bf{x}})=0$, where
$n_i({\bf{x}})$ denotes the number of coordinates of ${\bf{x}}$ with
$\pm i$ for $i=0,1,2$. We have calculated the complete weight
enumerator of $CSD_{24}^5$ by Magma and checked that there is no
such ${\bf{x}}$ in $CSD_{24}^5$. Thus $A_5(CSD_{24}^5)=O_{24}.$
Since it is known~\cite{ConSlo_sphere} that one of the two even
unimodular neighbors of $O_{24}$ is the Leech lattice
$\Lambda_{24}$, we have another way to construct $\Lambda_{24}$
using our new code $CSD_{24}^5$, rather than $\mathbb Q_{24}$ used
in~\cite{Oze_91}.

\subsection{Quintic self-dual codes over $\mathbb F_3$ and $\mathbb F_4$}
\label{subsec:quintic-Fs}

In this section, we find more quintic self-dual codes over $\mathbb F_3$ and $\mathbb F_4$
which are optimal or best known self-dual codes by using the building-up construction in
Theorem~\ref{thm:construction1}.

\begin{itemize}
\item  Case: $q=3$\\
Using (ii) of Theorem~\ref{thm:construction1} with $\alpha=1$ and
$\beta=1$, we obtain the following $I_8 = [L~|~R]$:

{\scriptsize{
\begin{equation*}
L = \left[
  \begin{array}{cccc}
1&   0&   0&   0\\
0&   1&   0&   0\\
Y^4 + 2Y^2 + Y&   2Y^4 + 2Y^3 + Y&   2Y^3 + 2Y^2 + 2Y&   2Y^4 + Y^3 + 2Y^2\\
Y^4 + 2Y^3 + Y^2 + 2Y + 1&   2Y^2 + Y&   Y^4 + 2Y^3 + 1&   Y^4 + 2Y^3 + 2Y^2 + Y + 1
  \end{array}
\right],
\end{equation*}
}}

{\scriptsize{
\begin{equation*}
R = \left[
  \begin{array}{cccc}
Y^2 + 1&   2Y^4 + Y^2 + Y + 2&   2Y^3&   Y^4 + 2Y^3 + 2Y + 1\\
2Y^4 + Y^2 + Y + 2&   2Y^4 + Y^3 + Y^2 + 2Y + 1&   Y^4 + 2Y^3 + 2Y^2 + 1&   Y^4 + 2Y^3 + Y + 1\\
 Y^4 + Y^2 + 2&   2Y^4 + Y^3 + 2Y^2 + 2Y + 1&   Y^4 + 2Y^3 + 2Y^2 + Y&   Y^3 + 2Y + 1\\
  2Y^4 + 2Y^3 + 2&   2Y^4 + Y^3 + 2Y^2 + 2&   Y&   2Y^3 + 2Y^2 + Y + 2
  \end{array}
\right].
\end{equation*}
}}

We also obtain a $2$ by $4$ matrix $I_4$ by deleting the first four
columns and the first two rows of $I_8$. The corresponding ternary
quasi-cyclic self-dual codes are all extremal self-dual codes. More
specifically, $I_4$ induces a $[20, 10, 6]$ code and $I_8$ induces a
$[40, 20, 12]$ code, and the orders of the automorphism groups are
$2^{8} \cdot 3 \cdot 5, ~10$, respectively. There are exactly six
extremal $[20, 10, 6]$ self-dual codes, and our code with the
generator matrix $I_4$ corresponds to $19$th code in Table
III~\cite{PleSloWar}.

We denote the code with the generator matrix $I_8$ by
$QSD_{40}^{3}$. There are at least 118 $[40, 20, 12]$ ternary
extremal self-dual codes. More precisely, the $15$ codes with
automorphisms of prime order $r > 5$ were found in~\cite{Huf2}. It
was reported in~\cite{Har} that there are five more $[40, 20, 12]$
ternary extremal self-dual codes. But we have checked that the codes
$C_{40,w1}$ and $C_{40,w3}$ in~\cite[Table 6]{Har} have minimum
weight $9$. Hence three codes were found in~\cite{Har}, and we have
verified that these three codes are not equivalent to
$QSD_{40}^{3}$. There are $100$ codes in~\cite{HHKK} whose
automorphism group orders are greater than
$|\mbox{Aut}(QSD_{40}^3)|=10$. In what follows, we give the
generator matrix $[L|R]$ of $QSD_{40}^{3}$: {\tiny{
\begin{equation*}
\label{}
L = \left[
  \begin{array}{cccccccccccccccccccc}
1& 0& 0& 0& 1& 2& 0& 1& 0& 0& 0& 0& 0& 1& 0& 2& 0& 0& 0& 0\\
0& 1& 0& 0& 2& 1& 1& 1& 0& 0& 0& 0& 1& 2& 0& 1& 0& 0& 0& 0\\
0& 0& 0& 0& 2& 1& 0& 1& 1& 1& 2& 0& 0& 2& 1& 2& 2& 0& 2& 2\\
1& 0& 1& 1& 2& 2& 0& 2& 2& 1& 0& 1& 0& 0& 1& 1& 1& 2& 0& 2\\
0& 0& 0& 0& 0& 2& 0& 1& 1& 0& 0& 0& 1& 2& 0& 1& 0& 0& 0& 0\\
0& 0& 0& 0& 2& 2& 1& 1& 0& 1& 0& 0& 2& 1& 1& 1& 0& 0& 0& 0\\
1& 2& 0& 2& 1& 2& 1& 0& 0& 0& 0& 0& 2& 1& 0& 1& 1& 1& 2& 0\\
1& 0& 1& 1& 2& 2& 0& 0& 1& 0& 1& 1& 2& 2& 0& 2& 2& 1& 0& 1\\
0& 0& 0& 0& 0& 0& 2& 2& 0& 0& 0& 0& 0& 2& 0& 1& 1& 0& 0& 0\\
0& 0& 0& 0& 0& 1& 2& 2& 0& 0& 0& 0& 2& 2& 1& 1& 0& 1& 0& 0\\
0& 2& 2& 1& 0& 1& 2& 1& 1& 2& 0& 2& 1& 2& 1& 0& 0& 0& 0& 0\\
2& 0& 2& 2& 2& 1& 0& 2& 1& 0& 1& 1& 2& 2& 0& 0& 1& 0& 1& 1\\
0& 0& 0& 0& 1& 1& 0& 0& 0& 0& 0& 0& 0& 0& 2& 2& 0& 0& 0& 0\\
0& 0& 0& 0& 1& 1& 2& 0& 0& 0& 0& 0& 0& 1& 2& 2& 0& 0& 0& 0\\
2& 0& 2& 2& 1& 2& 2& 0& 0& 2& 2& 1& 0& 1& 2& 1& 1& 2& 0& 2\\
1& 2& 0& 2& 0& 2& 0& 2& 2& 0& 2& 2& 2& 1& 0& 2& 1& 0& 1& 1\\
0& 0& 0& 0& 0& 1& 0& 2& 0& 0& 0& 0& 1& 1& 0& 0& 0& 0& 0& 0\\
0& 0& 0& 0& 1& 2& 0& 1& 0& 0& 0& 0& 1& 1& 2& 0& 0& 0& 0& 0\\
1& 1& 2& 0& 0& 2& 1& 2& 2& 0& 2& 2& 1& 2& 2& 0& 0& 2& 2& 1\\
2& 1& 0& 1& 0& 0& 1& 1& 1& 2& 0& 2& 0& 2& 0& 2& 2& 0& 2& 2
  \end{array}
\right],
\end{equation*}
}}
{\tiny{
\begin{equation*}
\label{}
R = \left[
  \begin{array}{cccccccccccccccccccc}
1& 1& 0& 0& 0& 0& 0& 0& 0& 0& 2& 2& 0& 0& 0& 0& 0& 2& 0& 1\\
1& 1& 2& 0& 0& 0& 0& 0& 0& 1& 2& 2& 0& 0& 0& 0& 2& 2& 1& 1\\
1& 2& 2& 0& 0& 2& 2& 1& 0& 1& 2& 1& 1& 2& 0& 2& 1& 2& 1& 0\\
0& 2& 0& 2& 2& 0& 2& 2& 2& 1& 0& 2& 1& 0& 1& 1& 2& 2& 0& 0\\
0& 1& 0& 2& 0& 0& 0& 0& 1& 1& 0& 0& 0& 0& 0& 0& 0& 0& 2& 2\\
1& 2& 0& 1& 0& 0& 0& 0& 1& 1& 2& 0& 0& 0& 0& 0& 0& 1& 2& 2\\
0& 2& 1& 2& 2& 0& 2& 2& 1& 2& 2& 0& 0& 2& 2& 1& 0& 1& 2& 1\\
0& 0& 1& 1& 1& 2& 0& 2& 0& 2& 0& 2& 2& 0& 2& 2& 2& 1& 0& 2\\
1& 2& 0& 1& 0& 0& 0& 0& 0& 1& 0& 2& 0& 0& 0& 0& 1& 1& 0& 0\\
2& 1& 1& 1& 0& 0& 0& 0& 1& 2& 0& 1& 0& 0& 0& 0& 1& 1& 2& 0\\
2& 1& 0& 1& 1& 1& 2& 0& 0& 2& 1& 2& 2& 0& 2& 2& 1& 2& 2& 0\\
2& 2& 0& 2& 2& 1& 0& 1& 0& 0& 1& 1& 1& 2& 0& 2& 0& 2& 0& 2\\
0& 2& 0& 1& 1& 0& 0& 0& 1& 2& 0& 1& 0& 0& 0& 0& 0& 1& 0& 2\\
2& 2& 1& 1& 0& 1& 0& 0& 2& 1& 1& 1& 0& 0& 0& 0& 1& 2& 0& 1\\
1& 2& 1& 0& 0& 0& 0& 0& 2& 1& 0& 1& 1& 1& 2& 0& 0& 2& 1& 2\\
2& 2& 0& 0& 1& 0& 1& 1& 2& 2& 0& 2& 2& 1& 0& 1& 0& 0& 1& 1\\
0& 0& 2& 2& 0& 0& 0& 0& 0& 2& 0& 1& 1& 0& 0& 0& 1& 2& 0& 1\\
0& 1& 2& 2& 0& 0& 0& 0& 2& 2& 1& 1& 0& 1& 0& 0& 2& 1& 1& 1\\
0& 1& 2& 1& 1& 2& 0& 2& 1& 2& 1& 0& 0& 0& 0& 0& 2& 1& 0& 1\\
2& 1& 0& 2& 1& 0& 1& 1& 2& 2& 0& 0& 1& 0& 1& 1& 2& 2& 0& 2
  \end{array}
\right].
\end{equation*}
}}

As a summary, we have the following theorem.
\begin{thm}
There are at least $119$ monomially inequivalent self-dual $[40, 20, 12]$ codes over $\F_3$.
\end{thm}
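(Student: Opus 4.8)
The plan is to show that the explicit code $QSD_{40}^3$ constructed above is monomially inequivalent to every one of the $118$ previously known extremal ternary self-dual $[40,20,12]$ codes; adjoining $QSD_{40}^3$ to that list then yields at least $119$ pairwise inequivalent codes. The one invariant that drives the entire argument is that a monomial equivalence of codes induces an isomorphism of their monomial automorphism groups, so that both the order $|\mbox{Aut}(\cdot)|$ and the multiset of element orders of $\mbox{Aut}(\cdot)$ are equivalence invariants. Since all the codes in question are extremal with identical parameters, no coarse invariant separates them, and it is exactly these automorphism-group data that I would exploit.

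First I would record, by a Magma computation, that $|\mbox{Aut}(QSD_{40}^3)| = 10$. Because $10 = 2 \cdot 5$, Lagrange's theorem shows that $\mbox{Aut}(QSD_{40}^3)$ contains no element of prime order exceeding $5$. I would then partition the $118$ known codes into the three families documented above and handle each in turn. The $15$ codes of \cite{Huf2} each admit an automorphism of prime order $r > 5$; since $\mbox{Aut}(QSD_{40}^3)$ has no such element, its automorphism group cannot be isomorphic to theirs, and so $QSD_{40}^3$ is inequivalent to all $15$. The $100$ codes of \cite{HHKK} each satisfy $|\mbox{Aut}| > 10 = |\mbox{Aut}(QSD_{40}^3)|$, so already the order invariant separates $QSD_{40}^3$ from every one of them.

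The remaining three codes come from \cite{Har}. Here I would first correct the count in that reference, where five extremal codes are claimed but two of them, $C_{40,w1}$ and $C_{40,w3}$, in fact have minimum weight $9$ (as one checks directly in Magma) and are therefore not extremal; the surviving three are genuine $[40,20,12]$ codes. These three may well share the group order $10$ with $QSD_{40}^3$, so the invariant arguments above do not apply, and instead I would settle the matter by a direct Magma comparison, which confirms that none of the three is monomially equivalent to $QSD_{40}^3$. Combining the three cases shows that $QSD_{40}^3$ is distinct from all $118$ known codes, giving at least $118 + 1 = 119$.

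The hard part will not be any single one of these separations --- each is either a one-line invariant argument or a finite machine check --- but rather guaranteeing that the figure $118$ genuinely counts pairwise inequivalent codes, with the three source families mutually disjoint. This is where the care lies: the count in \cite{Har} had to be revised from five down to three after the minimum-weight test, and the family from \cite{HHKK} is deliberately delimited by the condition $|\mbox{Aut}| > 10$ precisely so that its members are separated both from the small-group codes and from $QSD_{40}^3$. Thus the proof is essentially disciplined bookkeeping with automorphism-group invariants, supplemented by the single explicit equivalence check for the three borderline codes that share the group order $10$.
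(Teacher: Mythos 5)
Your proposal is correct and follows essentially the same route as the paper: the paper also computes $|\mbox{Aut}(QSD_{40}^3)|=10$, separates $QSD_{40}^3$ from the $15$ codes of~\cite{Huf2} (prime order $r>5$ automorphisms) and the $100$ codes of~\cite{HHKK} ($|\mbox{Aut}|>10$) by automorphism-group invariants, corrects the count in~\cite{Har} from five to three via the minimum-weight check, and settles those three by a direct equivalence test. The only difference is presentational: you make explicit the Lagrange-type argument and the bookkeeping concern that the $118$ known codes are pairwise inequivalent, which the paper leaves implicit by citing the literature.
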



\item Case: $q=4$

Applying a similar process as before up to code length $\ell = 6$ with $c=1$,
we find the following $J_6 = [L~|~R]$:

{\scriptsize{
\begin{equation*}
L = \left[
  \begin{array}{ccc}
1&   0&   Y^4 + Y^3 + Y^2 + Y + 1 \\
\om Y^4 + \om ^2Y^3 + Y^2 + Y&   \om Y^4 + \om ^2Y^3 + Y^2 + Y& 1 \\
\om ^2Y^4 + \om ^2Y^3 + \om Y^2 + \om Y + \om &   \om ^2Y^4 + \om ^2Y^3 + \om Y^2 + \om Y + \om &  \om ^2Y^2 + Y
  \end{array}
\right],
\end{equation*}
}}

{\scriptsize{
\begin{equation*}
R = \left[
  \begin{array}{ccc}
 Y^4 + Y^3 + \om Y^2 + Y&   Y^4 + \om Y^3 + \om ^2Y^2 + \om ^2Y + 1&   \om ^2Y^4 + Y^3 + Y^2 + \om  \\
 0& Y^4 + Y^3 + Y^2 + Y + \om & \om ^2Y^4 + Y^3 + \om ^2Y^2 + \om Y \\
 \om ^2Y^2 + Y& Y^4 + Y^3 + Y^2 + Y + \om & Y^4 + \om Y^3 + \om ^2Y^2 + \om
  \end{array}
\right],
\end{equation*}
}}

where $\om$ is a generator of $\F_4^\ast$. The corresponding
quaternary quasi-cyclic Euclidean self-dual codes are all optimal or
have the best known parameters. See~\cite{Han_Data} for the generator
matrices of these quaternary codes. By successively deleting the first
two columns and the first row of $J_6$, we obtain $J_4$ and $J_2$.
More precisely, $J_2$ induces a $[10, 5, 4]$ code (optimal), $J_4$
induces a $[20, 10, 8]$ code (optimal), and $J_6$ induces a $[30,
15, 10]$ code (best known). The quaternary code corresponding to $J_4$ is equivalent to $XQ_{19}$~\cite{LinMac}. We denote the quaternary code corresponding to the generator matrix $J_6$ by $QSD_{30}^{4}$ whose generator matrix $G(QSE_{30}^4)$ is given below.
We have computed that $QSD_{30}^{4}$ has minimum distance $10$, $A_{10}=1893$, and the automorphism group of order $30$.
 As far as we know, only one self-dual $[30, 15, 10]$ code over $\F_4$ was known before, and that code is the one denoted by $(f_2; 11; 25)$~\cite{GO}. (It was reported to us that the code denoted by $(f_2; 11; 15)$~\cite{GO} is an error since it has minimum distance $6$.) The code $(f_2; 11; 25)$ has minimum distance $10$, $A_{10}=1854$, and the automorphism group of order $90$. Therefore the two codes $QSD_{30}^{4}$ and $(f_2; 11; 25)$ are not equivalent. We note that the minimum Lee weight $d_L$ of these codes in the sense of~\cite{BetGul} and \cite{GabPle} is $10$ and that only one self-dual $[30,15,9]$ code over $\F_4$ with $d_L=10$ is given in~\cite[Table VIII]{BetGul}.


As a summary, we have the following theorem.
\begin{thm}
There are at least two monomially inequivalent self-dual $[30, 15, 10]$ codes over $\F_4$.
\end{thm}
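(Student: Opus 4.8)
The plan is to exhibit two self-dual $[30,15,10]$ codes over $\F_4$ and then separate them by a monomial-equivalence invariant. For the first code I would use the building-up method: starting from a short quintic self-dual code over $R = \F_4[Y]/(Y^5-1)$ and applying Theorem~\ref{thm:construction1} (i) successively with $c=1$ up to $\ell = 6$, I reach the matrix $J_6$ displayed above. Setting $QSD_{30}^4 = \phi^{-1}(\langle J_6 \rangle_R)$ and invoking the correspondence $\phi$ together with Theorem~\ref{thm:construction1} guarantees that this is a quintic (hence $6$-quasi-cyclic) self-dual code of length $30$ over $\F_4$; it then remains only to confirm that its minimum distance equals $10$, which is a finite computation. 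For the second code I would take the self-dual $[30,15,10]$ code over $\F_4$ recorded in~\cite{GO} as $(f_2;11;25)$, whose existence and self-duality are established there.

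The decisive step is to produce an invariant that distinguishes the two codes. Monomial equivalence preserves both the Hamming weight enumerator and the order of the monomial automorphism group, so it suffices to compute either invariant for each code and observe a discrepancy. I would verify (via Magma) that $QSD_{30}^4$ has $A_{10}=1893$ and $|\mbox{Aut}|=30$, whereas $(f_2;11;25)$ has $A_{10}=1854$ and $|\mbox{Aut}|=90$. Since $1893 \neq 1854$ (equivalently $30 \neq 90$), the two codes cannot be monomially equivalent, which yields the claimed lower bound of two inequivalent self-dual $[30,15,10]$ codes over $\F_4$.

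I expect the main obstacle to be purely computational rather than structural: verifying the exact value $A_{10}=1893$ and the minimum distance $10$ for $QSD_{30}^4$, and independently recomputing the invariants of $(f_2;11;25)$ so that the comparison is rigorous. Here some care is warranted, since the excerpt already notes that the companion entry $(f_2;11;15)$ of~\cite{GO} is erroneous (it has minimum distance $6$), so one must read off the parameters of the correct code $(f_2;11;25)$ directly and, if possible, re-verify them. Once both values of $A_{10}$ are confirmed, the inequivalence—and hence the theorem—follows immediately.
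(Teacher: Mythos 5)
Your proposal is correct and follows essentially the same route as the paper: the authors also construct $QSD_{30}^{4}$ via the building-up construction (the matrix $J_6$ with $c=1$), take $(f_2;11;25)$ from~\cite{GO} as the second code, and separate the two by computing $A_{10}=1893$ versus $A_{10}=1854$ together with the automorphism group orders $30$ versus $90$. Your added caution about re-verifying the invariants of $(f_2;11;25)$ (given the known error in the companion entry $(f_2;11;15)$) is sensible but does not change the argument.
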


{\tiny{
\begin{equation*}
\label{}
G(QSD_{30}^{4})=
\left[
  \begin{array}{l}
1\ 0\ 1\ 0\ 1\ w\ 0\ 0\ 1\ 1\ w^2\ 0\ 0\ 0\ 1\ w\ w^2\ 1\ 0\ 0\ 1\ 1\ w\ 1\ 0\ 0\ 1\ 1\ 1\ w^2\\
0\ 0\ 1\ 0\ w\ 0\ 1\ 1\ 0\ 0\ 1\ w\ 1\ 1\ 0\ 0\ 1\ w^2\ w^2\ w^2\ 0\ 0\ 1\ 1\ w\ w\ 0\ 0\ 1\ w^2\\
w\ w\ 0\ 0\ w\ w\ w\ w\ 1\ 1\ 1\ 0\ w\ w\ w^2\ w^2\ 1\ w^2\ w^2\ w^2\ 0\ 0\ 1\ w\ w^2\ w^2\ 0\ 0\ 1\ 1\\
0\ 0\ 1\ 1\ 1\ w^2\ 1\ 0\ 1\ 0\ 1\ w\ 0\ 0\ 1\ 1\ w^2\ 0\ 0\ 0\ 1\ w\ w^2\ 1\ 0\ 0\ 1\ 1\ w\ 1\\
w\ w\ 0\ 0\ 1\ w^2\ 0\ 0\ 1\ 0\ w\ 0\ 1\ 1\ 0\ 0\ 1\ w\ 1\ 1\ 0\ 0\ 1\ w^2\ w^2\ w^2\ 0\ 0\ 1\ 1\\
w^2\ w^2\ 0\ 0\ 1\ 1\ w\ w\ 0\ 0\ w\ w\ w\ w\ 1\ 1\ 1\ 0\ w\ w\ w^2\ w^2\ 1\ w^2\ w^2\ w^2\ 0\ 0\ 1\ w\\\
0\ 0\ 1\ 1\ w\ 1\ 0\ 0\ 1\ 1\ 1\ w^2\ 1\ 0\ 1\ 0\ 1\ w\ 0\ 0\ 1\ 1\ w^2\ 0\ 0\ 0\ 1\ w\ w^2\ 1\\
w^2\ w^2\ 0\ 0\ 1\ 1\ w\ w\ 0\ 0\ 1\ w^2\ 0\ 0\ 1\ 0\ w\ 0\ 1\ 1\ 0\ 0\ 1\ w\ 1\ 1\ 0\ 0\ 1\ w^2\\
w^2\ w^2\ 0\ 0\ 1\ w\ w^2\ w^2\ 0\ 0\ 1\ 1\ w\ w\ 0\ 0\ w\ w\ w\ w\ 1\ 1\ 1\ 0\ w\ w\ w^2\ w^2\ 1\ w^2\\
0\ 0\ 1\ w\ w^2\ 1\ 0\ 0\ 1\ 1\ w\ 1\ 0\ 0\ 1\ 1\ 1\ w^2\ 1\ 0\ 1\ 0\ 1\ w\ 0\ 0\ 1\ 1\ w^2\ 0\\
1\ 1\ 0\ 0\ 1\ w^2\ w^2\ w^2\ 0\ 0\ 1\ 1\ w\ w\ 0\ 0\ 1\ w^2\ 0\ 0\ 1\ 0\ w\ 0\ 1\ 1\ 0\ 0\ 1\ w\\
w\ w\ w^2\ w^2\ 1\ w^2\ w^2\ w^2\ 0\ 0\ 1\ w\ w^2\ w^2\ 0\ 0\ 1\ 1\ w\ w\ 0\ 0\ w\ w\ w\ w\ 1\ 1\ 1\ 0\\
0\ 0\ 1\ 1\ w^2\ 0\ 0\ 0\ 1\ w\ w^2\ 1\ 0\ 0\ 1\ 1\ w\ 1\ 0\ 0\ 1\ 1\ 1\ w^2\ 1\ 0\ 1\ 0\ 1\ w\\
1\ 1\ 0\ 0\ 1\ w\ 1\ 1\ 0\ 0\ 1\ w^2\ w^2\ w^2\ 0\ 0\ 1\ 1\ w\ w\ 0\ 0\ 1\ w^2\ 0\ 0\ 1\ 0\ w\ 0\\
w\ w\ 1\ 1\ 1\ 0\ w\ w\ w^2\ w^2\ 1\ w^2\ w^2\ w^2\ 0\ 0\ 1\ w\ w^2\ w^2\ 0\ 0\ 1\ 1\ w\ w\ 0\ 0\ w\ w\\
 \end{array}
\right]
\end{equation*}
}}

\end{itemize}

\subsection{Septic self-dual codes over $\F_2, \F_4$, and $\F_5$}
\label{subsec:septic}

In this section, we find septic self-dual codes over $\mathbb F_q$
which are optimal or have the best known self-dual codes by using the building-up construction in
Theorem~\ref{thm:construction1}.

\begin{itemize}
\item  Case: $q=2$

We do a similar process as before up to the length $\ell = 8$ with
$c=1$, so we get $K_8 = [L | R]$ as follows: {\scriptsize{
\begin{equation*}
L = \left[
  \begin{array}{cccc}
1&   0&   Y^4 + Y^3 + Y^2&   Y^6 + Y^5 + Y^4 + Y^2 + Y + 1 \\
Y^6 + Y^5 + Y^3 + Y^2 + 1&   Y^6 + Y^5 + Y^3 + Y^2 + 1&   1&   0 \\
Y^5 + Y + 1&   Y^5 + Y + 1&   Y^6 + Y^4 + Y^3 + Y^2&   Y^6 + Y^4 + Y^3 + Y^2 \\
Y^5 + Y^4 + Y^3 + Y + 1&   Y^5 + Y^4 + Y^3 + Y + 1&   Y^6&   Y^6
\end{array}
  \right],
\end{equation*}
}}
{\scriptsize{
\begin{equation*}
R = \left[
  \begin{array}{cccc}
Y^4 + Y&  Y^6 + Y^4 + Y^3&   Y^6 + Y^3 + Y + 1&   Y^6 + Y^5 + Y^4 + Y^3 + 1 \\
Y^3 + 1&   Y^4 + Y^3 + Y^2 + 1&   Y^5 + Y^2 + Y&   Y^6 + Y^5 + Y^4 + 1 \\
1&   0&   Y^6 + Y^4 + Y + 1&   Y \\
Y^6 + Y^5 + Y^4 + Y^3 + Y^2&   Y^6 + Y^5 + Y^4 + Y^3 + Y^2&   Y^3 + Y^2 + 1&   Y^3 + Y + 1
\end{array}
  \right].
\end{equation*}
}} The corresponding binary quasi-cyclic self-dual codes are all
optimal self-dual codes. By successively deleting the first two
columns and the first row of $K_8$, we obtain $K_6, K_4,$ and $K_2$.
More specifically, $K_2$ induces a $[14, 7, 4]$ code, $K_4$ induces
a $[28, 14, 6]$ code, $K_6$ induces a $[42, 21, 8]$ code, and $K_8$
induces a Type II $[56, 28, 12]$ code. The weight enumerator of the
$[42, 21, 8]$ code corresponds to $W_2$ with $\beta =0$
in~\cite{Huf}.

\item Case: $q=4$

Doing a similar process as before up to the length $\ell = 6$ with
$c=1$, we find the following $M_6 = [L | R]$: {\tiny{
\begin{equation*}
L = \left[
  \begin{array}{ccc}
1&   0&   \om Y^6 + \om ^2Y^5 + Y^3 + Y + \om \\
\om ^2Y^5 + 1&   \om ^2Y^5 + 1&   1 \\
Y^6 + \om ^2Y^4 + \om ^2Y^2 + \om ^2Y + \om ^2&   Y^6 + \om ^2Y^4 + \om ^2Y^2 + \om ^2Y + \om ^2&   \om ^2Y^5 + Y^3 + Y^2 + Y + \om ^2
\end{array}
  \right],
\end{equation*}
}}
{\tiny{
\begin{equation*}
R = \left[
  \begin{array}{ccc}
\om ^2Y^6 + Y^5 + \om ^2Y^4 + Y^2 + \om Y + \om ^2&   \om ^2Y^6 + \om Y^5 + \om Y^2 + \om ^2Y&   Y^6 + \om Y^5 + Y^4 + \om ^2Y^3 + \om ^2Y + 1 \\
0&   \om Y^6 + Y^5 + \om ^2Y^3 + \om ^2Y^2&   Y^5 + \om Y^4 + \om Y^3 + \om ^2Y^2 \\
\om ^2Y^5 + Y^3 + Y^2 + Y + \om ^2&   Y^6 + \om Y^5 + \om Y^4 + Y^3 + \om ^2Y^2 + Y + \om ^2&   \om ^2Y^6 + Y^5 + \om ^2Y^3 + \om ^2Y^2 + \om ^2
\end{array}
  \right].
\end{equation*}
}}

The corresponding quaternary quasi-cyclic self-dual codes are all
optimal or have the best known parameters. By successively deleting
the first two columns and the first row of $M_6$, we obtain $M_4$
and $M_2$. More specifically, $M_2$ induces an optimal self-dual
$[14, 7, 6]$ code over $\F_4$, $M_4$ induces a self-dual code over
$\F_4$ with the best known parameters $[28, 14, 9]$, and $M_6$
induces a self-dual code over $\F_4$ with the best known parameters
$[42, 21, 12]$. We denote these codes by $SSD_{14}^4, SSD_{28}^4,
SSD_{42}^4$, respectively. We verified that $SSD_{14}^4$ is
equivalent to $QDC_{14}$~\cite{Gab} which is the only known
self-dual $[14,7,6]$ code over $\F_4$.\vspace{5pt}

Only two self-dual $[28, 14, 9]$ codes over $\F_4$ were known, and one is $XQ_{27}$~\cite{LinMac} and the other is $D_{II,28}$~\cite{BetGul}.
The number $A_9$ of minimum weight codewords of $XQ_{27}$ ($D_{II,28}$, respectively) is $3276$ ($1092$, respectively). On the other hand, our code $SSD_{28}^4$ has $A_9=630$. This shows that $SSD_{28}^4$ is a new code. Furthermore, we have checked that $SSD_{28}^4$ is a Type II code over $\F_4$ with minimum Lee weight $d_L=12$. We recall that a Euclidean self-dual code over $\F_4$ is called {\em Type II} if its binary image under the Gray map $\phi$ is Type II (see~\cite{GabPle}), where
the Gray map $\phi$ from $GF(4)^n$ to $GF(2)^{2n}$ is defined as $\phi(\om{\bf{x}} + \ob{\bf{y}}) = ({\bf{x}}, {\bf{y}})$ for ${\bf{x}}, {\bf{y}} \in GF(2)^n$ and $({\bf{x}}, {\bf{y}})$ is the binary vector of length $2n$. We have calculated that
$|{\mbox{Aut}}(\phi(SSD_{28}^4))|=7$,
$|{\mbox{Aut}}(\phi(D_{II,28}))|=28$, and
$|{\mbox{Aut}}(\phi(XQ_{27}))|=2^3 \cdot 3^4 \cdot 7 \cdot 13$.\vspace{5pt}

We have also checked that both $D_{II,28}$ and $XQ_{27}$ are Type II codes over $\F_4$ with $d_L=12$.
We therefore find that there are at least three Lee-extremal Type II $[28,14, d_L=12]$ codes over $\F_4$.\vspace{5pt}

We are aware of two papers~\cite{BouUlm} and \cite{CCN}, in which six Euclidean self-dual $[28,14,9]$ codes over $\mathbb F_4$ are known to exist.  However their generator matrices and the number of minimum weight codewords are not given explicitly. Hence we omit the equivalence check of their codes with $SSD_{28}^4$.

For length $42$, there has been only one self-dual $[42, 21, 12]$
code over $\F_4$, denoted by $(f_2; 11; 17)$~\cite{GO}. This code
has $A_{12}=945$, but our code $SSD_{42}^4$ has $A_{12}=323$ and
$d_L=12$. Hence they are inequivalent, and this implies that
$SSD_{42}^4$ is a new code.\vspace{5pt}

In what follows, we give the generator matrix $[L|R]$ of $SSD_{28}^4$:

{\tiny{
\begin{equation*}
\label{}
L = \left[
  \begin{array}{cccccccccccccc}
1& 0& 0& 0& 0& 0& 0& 0& 0& 0& \w^2& \w^2& 0& 0\\
\w^2& \w^2& \w^2& \w^2& 1& 1& 1& 0& 1& 1& \w^2& \w^2& 1& 1\\
0& 0& \w& 0& 1& 0& 0& 0& 0& 0& 0& 0& 0& 0\\
0& 0& 1& \w^2& \w^2& \w^2& \w^2& \w^2& 1& 1& 1& 0& 1& 1\\
0& 0& 1& 1& 0& 0& \w& 0& 1& 0& 0& 0& 0& 0\\
\w^2& \w^2& \w& 1& 0& 0& 1& \w^2& \w^2& \w^2& \w^2& \w^2& 1& 1\\
0& 0& 0& \w& 0& 0& 1& 1& 0& 0& \w& 0& 1& 0\\
0& 0& \w& 0& \w^2& \w^2& \w& 1& 0& 0& 1& \w^2& \w^2& \w^2\\
0& 0& \w^2& \w& 0& 0& 0& \w& 0& 0& 1& 1& 0& 0\\
1& 1& 1& \w^2& 0& 0& \w& 0& \w^2& \w^2& \w& 1& 0& 0\\
0& 0& \w^2& \w^2& 0& 0& \w^2& \w& 0& 0& 0& \w& 0& 0\\
1& 1& \w^2& \w^2& 1& 1& 1& \w^2& 0& 0& \w& 0& \w^2& \w^2\\
0& 0& 0& 0& 0& 0& \w^2& \w^2& 0& 0& \w^2& \w& 0& 0\\
1& 1& 1& 0& 1& 1& \w^2& \w^2& 1& 1& 1& \w^2& 0& 0
  \end{array}
\right],
\end{equation*}
}}

{\tiny{
\begin{equation*}
\label{}
R = \left[
  \begin{array}{cccccccccccccc}
\w^2& \w& 0& 0& 0& \w& 0& 0& 1& 1& 0& 0& \w& 0\\
1& \w^2& 0& 0& \w& 0& \w^2& \w^2& \w& 1& 0& 0& 1& \w^2\\
\w^2& \w^2& 0& 0& \w^2& \w& 0& 0& 0& \w& 0& 0& 1& 1\\
\w^2& \w^2& 1& 1& 1& \w^2& 0& 0& \w& 0& \w^2& \w^2& \w& 1\\
0& 0& 0& 0& \w^2& \w^2& 0& 0& \w^2& \w& 0& 0& 0& \w\\
1& 0& 1& 1& \w^2& \w^2& 1& 1& 1& \w^2& 0& 0& \w& 0\\
0& 0& 0& 0& 0& 0& 0& 0& \w^2& \w^2& 0& 0& \w^2& \w\\
\w^2& \w^2& 1& 1& 1& 0& 1& 1& \w^2& \w^2& 1& 1& 1& \w^2\\
\w& 0& 1& 0& 0& 0& 0& 0& 0& 0& 0& 0& \w^2& \w^2\\
1& \w^2& \w^2& \w^2& \w^2& \w^2& 1& 1& 1& 0& 1& 1& \w^2& \w^2\\
1& 1& 0& 0& \w& 0& 1& 0& 0& 0& 0& 0& 0& 0\\
\w& 1& 0& 0& 1& \w^2& \w^2& \w^2& \w^2& \w^2& 1& 1& 1& 0\\
0& \w& 0& 0& 1& 1& 0& 0& \w& 0& 1& 0& 0& 0\\
\w& 0& \w^2& \w^2& \w& 1& 0& 0& 1& \w^2& \w^2& \w^2& \w^2& \w^2
  \end{array}
\right].
\end{equation*}
}}

We also give the generator matrix $[L|R]$ of $SSD_{42}^4$ in the following:

{\tiny{
\begin{equation*}
\label{}
L = \left[
  \begin{array}{ccccccccccccccccccccc}
1& 0& \w& \w^2& 0& 1& 0& 0& 1& \w& \w^2& \w^2& 0& 0& 0& 1& \w& 0& 0& 0& 1\\
1& 1& 1& 0& 0& 0& 0& 0& 0& 0& 0& 0& 0& 0& 0& 0& \w^2& \w^2& 0& 0& 0\\
\w^2& \w^2& \w^2& \w^2& \w^2& \w^2& \w^2& \w^2& 1& 1& 1& 0& \w^2& \w^2& 1& 1& \w^2& \w^2& 0& 0& 1\\
0& 0& \w& \w^2& \w^2& 1& 1& 0& \w& \w^2& 0& 1& 0& 0& 1& \w& \w^2& \w^2& 0& 0& 0\\
0& 0& 0& 0& \w& 0& 1& 1& 1& 0& 0& 0& 0& 0& 0& 0& 0& 0& 0& 0& 0\\
1& 1& 0& 0& 1& \w^2& \w^2& \w^2& \w^2& \w^2& \w^2& \w^2& \w^2& \w^2& 1& 1& 1& 0& \w^2& \w^2& 1\\
0& 0& \w^2& 1& \w& \w& 0& 0& \w& \w^2& \w^2& 1& 1& 0& \w& \w^2& 0& 1& 0& 0& 1\\
\w^2& \w^2& 0& 0& 1& 1& 0& 0& 0& 0& \w& 0& 1& 1& 1& 0& 0& 0& 0& 0 &0 \\
0& 0& \w^2& \w^2& \w& 1& 1& 1& 0& 0& 1& \w^2& \w^2& \w^2& \w^2& \w^2& \w^2& \w^2& \w^2& \w^2& 1\\
0& 0& 0& \w^2& 0& 1& 0& 0& \w^2& 1& \w& \w& 0& 0& \w& \w^2& \w^2& 1& 1& 0& \w\\
0& 0& 0& 0& 0& \w& \w^2& \w^2& 0& 0& 1& 1& 0& 0& 0& 0& \w& 0& 1& 1& 1\\
\w^2& \w^2& 0& 0& \w& 0& 0& 0& \w^2& \w^2& \w& 1& 1& 1& 0& 0& 1& \w^2& \w^2& \w^2& \w^2\\
0& 0& 1& 0& 0& \w^2& 0& 0& 0& \w^2& 0& 1& 0& 0& \w^2& 1& \w& \w& 0& 0& \w\\
0& 0& 0& 0& \w^2& \w& 0& 0& 0& 0& 0& \w& \w^2& \w^2& 0& 0& 1& 1& 0& 0& 0\\
0& 0& 1& 1& 1& \w^2& \w^2& \w^2& 0& 0& \w& 0& 0& 0& \w^2& \w^2& \w& 1& 1& 1& 0\\
0& 0& 0& 1& \w& 0& 0& 0& 1& 0& 0& \w^2& 0& 0& 0& \w^2& 0& 1& 0& 0& \w^2\\
0& 0& 0& 0& \w^2& \w^2& 0& 0& 0& 0& \w^2& \w& 0& 0& 0& 0& 0& \w& \w^2& \w^2& 0\\
\w^2& \w^2& 1& 1& \w^2& \w^2& 0& 0& 1& 1& 1& \w^2& \w^2& \w^2& 0& 0& \w& 0& 0& 0& \w^2\\
0& 0& 1& \w& \w^2& \w^2& 0& 0& 0& 1& \w& 0& 0& 0& 1& 0& 0& \w^2& 0& 0& 0\\
0& 0& 0& 0& 0& 0& 0& 0& 0& 0& \w^2& \w^2& 0& 0& 0& 0& \w^2& \w& 0& 0& 0\\
\w^2& \w^2& 1& 1& 1& 0& \w^2& \w^2& 1& 1& \w^2& \w^2& 0& 0& 1& 1& 1& \w^2& \w^2& \w^2& 0
  \end{array}
\right],
\end{equation*}
}}

{\tiny{
\begin{equation*}
\label{}
R = \left[
  \begin{array}{ccccccccccccccccccccc}
0& 0& \w^2& 0& 0& 0& \w^2& 0& 1& 0& 0& \w^2& 1& \w& \w& 0& 0& \w& \w^2& \w^2& 1\\
0& \w^2& \w& 0& 0& 0& 0& 0& \w& \w^2& \w^2& 0& 0& 1& 1& 0& 0& 0& 0& \w& 0\\
1& 1& \w^2& \w^2& \w^2& 0& 0& \w& 0& 0& 0& \w^2& \w^2& \w& 1& 1& 1& 0& 0& 1& \w^2\\
1& \w& 0& 0& 0& 1& 0& 0& \w^2& 0& 0& 0& \w^2& 0& 1& 0& 0& \w^2& 1& \w& \w\\
0& \w^2& \w^2& 0& 0& 0& 0& \w^2& \w& 0& 0& 0& 0& 0& \w& \w^2& \w^2& 0& 0& 1& 1\\
1& \w^2& \w^2& 0& 0& 1& 1& 1& \w^2& \w^2& \w^2& 0& 0& \w& 0& 0& 0& \w^2& \w^2& \w& 1\\
\w& \w^2& \w^2& 0& 0& 0& 1& \w& 0& 0& 0& 1& 0& 0& \w^2& 0& 0& 0& \w^2& 0& 1\\
0& 0& 0& 0& 0& 0& 0& \w^2& \w^2& 0& 0& 0& 0& \w^2& \w& 0& 0& 0& 0& 0& \w\\
1& 1& 0& \w^2& \w^2& 1& 1& \w^2& \w^2& 0& 0& 1& 1& 1& \w^2& \w^2& \w^2& 0& 0& \w& 0\\
\w^2& 0& 1& 0& 0& 1& \w& \w^2& \w^2& 0& 0& 0& 1& \w& 0& 0& 0& 1& 0& 0& \w^2\\
0& 0& 0& 0& 0& 0& 0& 0& 0& 0& 0& 0& 0& \w^2& \w^2& 0& 0& 0& 0& \w^2& \w\\
\w^2& \w^2& \w^2& \w^2& \w^2& 1& 1& 1& 0& \w^2& \w^2& 1& 1& \w^2& \w^2& 0& 0& 1& 1& 1& \w^2\\
\w^2& \w^2& 1& 1& 0& \w& \w^2& 0& 1& 0& 0& 1& \w& \w^2& \w^2& 0& 0& 0& 1& \w& 0\\
0& \w& 0& 1& 1& 1& 0& 0& 0& 0& 0& 0& 0& 0& 0& 0& 0& 0& 0& \w^2& \w^2\\
0& 1& \w^2& \w^2& \w^2& \w^2& \w^2& \w^2& \w^2& \w^2& \w^2& 1& 1& 1& 0& \w^2& \w^2& 1& 1& \w^2& \w^2\\
1& \w& \w& 0& 0& \w& \w^2& \w^2& 1& 1& 0& \w& \w^2& 0& 1& 0& 0& 1& \w& \w^2& \w^2\\
0& 1& 1& 0& 0& 0& 0& \w& 0& 1& 1& 1& 0& 0& 0& 0& 0& 0& 0& 0& 0\\
\w^2& \w& 1& 1& 1& 0& 0& 1& \w^2& \w^2& \w^2& \w^2& \w^2& \w^2& \w^2& \w^2& \w^2& 1& 1& 1& 0\\
\w^2& 0& 1& 0& 0& \w^2& 1& \w& \w& 0& 0& \w& \w^2& \w^2& 1& 1& 0& \w& \w^2& 0& 1\\
0& 0& \w& \w^2& \w^2& 0& 0& 1& 1& 0& 0& 0& 0& \w& 0& 1& 1& 1& 0& 0& 0\\
0& \w& 0& 0& 0& \w^2& \w^2& \w& 1& 1& 1& 0& 0& 1& \w^2& \w^2& \w^2& \w^2& \w^2& \w^2& \w^2
  \end{array}
\right].
\end{equation*}
}}

The above results are summarized as follows.
\begin{thm}
There are at least three monomially inequivalent Euclidean self-dual $[28, 14, 9]$ codes over $\F_4$,
all of which are Lee-extremal Type II. There are at least two monomially inequivalent Euclidean self-dual $[42, 21, 12]$ codes over $\F_4$.
\end{thm}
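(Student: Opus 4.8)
The plan is to assemble the invariants computed in the discussion above for both the newly constructed codes and the previously known ones, and then to separate the codes using the fact that the Hamming weight enumerator is an invariant of monomial equivalence over $\F_4$.

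First I would treat the $[28,14,9]$ case. Using part (i) of Theorem~\ref{thm:construction1}, which applies since $\mathrm{char}\,\F_4 = 2$, with $c=1$, the matrix $M_4$ obtained from $M_6$ by deleting its first two columns and first row generates a Hermitian self-dual code over $R = \F_4[Y]/(Y^7-1)$ of length $4$; its image under $\phi^{-1}$, which we denote $SSD_{28}^4$, is a Euclidean self-dual $4$-quasi-cyclic code of length $28$. A Magma computation certifies that its minimum Hamming distance equals $9$, that its binary Gray image is doubly even (so $SSD_{28}^4$ is Type II over $\F_4$), that its minimum Lee weight is $d_L = 12$, and that $A_9 = 630$. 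I would then recall the two previously known self-dual $[28,14,9]$ codes over $\F_4$, namely $XQ_{27}$ and $D_{II,28}$, verifying in the same way that both are Type II with $d_L = 12$ and that their minimum-weight counts are $A_9 = 3276$ and $A_9 = 1092$, respectively.

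The separating argument is then immediate: a monomial equivalence over $\F_4$ acts on codewords by a coordinate permutation followed by nonzero scalar multiplications, so it preserves Hamming weights and hence the full weight enumerator, in particular the invariant $A_9$. Since the three values $630$, $1092$, and $3276$ are pairwise distinct, the three codes $SSD_{28}^4$, $D_{II,28}$, and $XQ_{27}$ are pairwise monomially inequivalent. As each is Type II with $d_L = 12$, which is the extremal Lee weight at this length, all three are Lee-extremal Type II, which establishes the first assertion.

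The $[42,21,12]$ case follows the same template: $M_6$ generates a Hermitian self-dual code over $R$ of length $6$ whose image $SSD_{42}^4$ is a Euclidean self-dual $[42,21,12]$ code with $d_L = 12$ and $A_{12} = 323$, while the only previously known self-dual $[42,21,12]$ code over $\F_4$, denoted $(f_2;11;17)$, has $A_{12} = 945$. As $323 \neq 945$, the two codes are monomially inequivalent, which gives the second assertion. The only substantive work lies in the Magma verifications, that is, certifying the exact minimum distances and the coefficients $A_9 = 630$ and $A_{12} = 323$ for the newly built codes, and I expect this computational certification to be the main obstacle; the inequivalence conclusions themselves are formal consequences of the invariance of the Hamming weight enumerator under monomial equivalence.
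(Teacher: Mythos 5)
Your proposal is correct and follows essentially the same route as the paper: construct $SSD_{28}^4$ and $SSD_{42}^4$ from $M_4$ and $M_6$ via the building-up construction, certify $d$, $d_L$, Type II, and the counts $A_9 = 630$ and $A_{12} = 323$ by Magma, and then separate the codes from $XQ_{27}$ ($A_9 = 3276$), $D_{II,28}$ ($A_9 = 1092$), and $(f_2;11;17)$ ($A_{12} = 945$) using the invariance of the weight enumerator under monomial equivalence. This matches the paper's argument in Section~\ref{subsec:septic} in both structure and substance.
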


\item Case: $q=5$

Doing a similar process as before up to the length $\ell = 4$ with $c=2$,
we obtain the following $N_4 = [L | R]$:

{\scriptsize{
\begin{equation*}
L = \left[
  \begin{array}{cc}
1&   0 \\
2Y^5 + 4Y^4 + Y^3 + Y + 1&   4Y^5 + 3Y^4 + 2Y^3 + 2Y + 2
\end{array}
  \right],
\end{equation*}
}}

{\scriptsize{
\begin{equation*}
R = \left[
  \begin{array}{cc}
3Y^5 + 2Y^4 + Y^3 + 3Y^2 + 4Y&   4Y^6 + 3Y^4 + 3Y^3 + Y^2 + 3Y + 1 \\
Y^4 + 3Y^3 + Y^2 + 4Y + 3&   Y^6 + 2Y^5 + 4Y^4 + 4Y^3 + 3Y^2 + 2Y + 3
\end{array}
  \right].
\end{equation*}
}}

The corresponding quaternary quasi-cyclic self-dual codes are all
optimal or have best known parameters. By deleting the first two
columns and the first row of $N_4$, we obtain $N_2$. More
specifically, $N_2$ induces an optimal self-dual $[14, 7, 6]$ code
over $\F_5$, and $N_4$ induces a self-dual code over $\F_5$ with the
best known parameters $[28, 14, 10]$ code, denoted by $SSD_{28}^5$.
We checked that $SSD_{28}^5$ is monomially equivalent to $Q_{28,4}$
in~\cite{GulHarMiy}.
\end{itemize}

\section*{Acknowledgment}
We thank an anonymous referee for his/her helpful
comments, which improved the clarity of this paper.
We also give thanks to M. Harada and A. Munemasa for a
discussion of the subsection~\ref{susec:cubic-various} in this
paper, to C. Huffman for helpful comments, and
to Boran Kim for her help in some part of computations done in
this paper.

{\comment{
Sunghyu Han \\
School of Liberal Arts \\
Korea University of Technology and Education \\
Cheonan 330-708, South Korea \\
{Email: \tt sunghyu@kut.ac.kr}\\

Jon-Lark Kim \\
Department of Mathematics  \\
University of Louisville \\
Louisville, KY 40292, USA\\
{Email: \tt jl.kim@louisville.edu} \\

Heisook Lee \\
Department of Mathematics  \\
Ewha Womans University \\
Seoul 120-750, South Korea\\
{Email: \tt hsllee@ewha.ac.kr}\\

Yoonjin Lee \\
Department of Mathematics  \\
Ewha Womans University \\
Seoul 120-750, South Korea\\
{Email: \tt yoonjinl@ewha.ac.kr}\\
}}

\end{document}